\documentclass[acmsmall,nonacm,natbib=false]{acmart}
\usepackage[
  datamodel=acmdatamodel,
  style=acmnumeric,
  ]{biblatex}
\addbibresource{library.bib}
\usepackage{amsthm}
\usepackage{thmtools,thm-restate}
\usepackage{amsmath}
\usepackage{lineno}
\usepackage{cleveref}
\usepackage{xcolor}
\usepackage{enumitem}

\newtheorem{theorem}{Theorem}
\newtheorem{lemma}{Lemma}
\newtheorem{algorithm}{Algorithm}
\newtheorem{definition}{Definition}
\newtheorem{example}{Example}

\DeclareMathOperator{\commit}{commit}
\DeclareMathOperator{\adopt}{adopt}
\newcommand{\tuple}[1]{\ensuremath{\langle #1\rangle}}

\newcommand{\remove}[1]{\ignorespaces}
\newcommand{\mayberemove}[1]{\ignorespaces}

\title{Consensus in the Unknown-Participation Message-Adversary Model}
\author{Giuliano Losa}
\email{giuliano@stellar.org}
\orcid{0000-0003-2341-7928}
\affiliation{%
  \institution{Stellar Development Foundation}
  \country{USA}
}
\author{Eli Gafni}
\email{eli@ucla.edu}
\affiliation{%
  \institution{University of California, Los Angeles} \country{USA}
}
\date{\today}

\begin{document}


\begin{abstract}
    We propose a new distributed-computing model, inspired by permissionless distributed systems such as Bitcoin and Ethereum, that allows studying permissionless consensus in a mathematically regular setting.
    Like in the sleepy model of Pass and Shi, we consider a synchronous, round-by-round message-passing system in which the set of online processors changes each round.
    Unlike the sleepy model, the set of processors may be infinite.
    Moreover, processors never fail; instead, an adversary can temporarily or permanently impersonate some processors.
    Finally, processors have access to a strong form of message-authentication that authenticates not only the sender of a message but also the round in which the message was sent.

    Assuming that, each round, the adversary impersonates less than 1/2 of the online processors, we present two consensus algorithms.
    The first ensures deterministic safety and constant latency in expectation, assuming a probabilistic leader-election oracle.
    The second ensures deterministic safety and deterministic liveness assuming irrevocable impersonation and eventually-stabilizing participation.

    The model is unrealistic in full generality.
    However, if we assume finitely many processes and that the set of faulty processes remains constant, the model coincides with a practically-motivated model: the static version of the sleepy model.
\end{abstract}

\maketitle
\section{Introduction}

A common feature of permissionless systems such as Bitcoin and Ethereum is that they are dynamically available.
This means that they continue working correctly even if a large fraction of the participants unexpectedly go offline or come back online, as long as a sufficient fraction (typically more than 1/2) of the online participants at any given time are well-behaved.

In contrast, traditional Byzantine fault-tolerant (BFT) consensus algorithms lose liveness in this setting because they rely on the availability of fixed-sized quorums.
Using proof-of-work, Bitcoin's Nakamoto consensus was the first dynamically-available consensus algorithm, and it is resilient to a failure ratio of 1/2 (which is optimal~\cite[Theorem 6.2]{lewis-pye_permissionless_2023}).
However, in Nakamoto consensus, the adversary can always be lucky and cause disagreement with some probability depending on a security parameter of the protocol.
Pass and Shi's work on the sleepy model~\cite{pass_sleepy_2017}, a formal model of dynamically-available systems, initiated a series of works on consensus algorithms that guarantee agreement probabilistically under a failure ratio of 1/2, like Nakamoto consensus, but without proof-of-work~\cite{pass_sleepy_2017,daian_snow_2019,badertscher_ouroboros_2018,damato_no_2022,goyal_instant_2021}.

More recently, Momose and Ren~\cite{momose_constant_2022} asked whether safety can be guaranteed deterministically in the dynamically-available setting.
Several works proposing consensus algorithms with deterministic safety under various assumptions have followed~\cite{malkhi_byzantine_2022,pu_safe_2022,pu_gorilla_2023,malkhi_towards_2023}, but the exact boundary of what can be achieved deterministically is still not clear.

In this paper, we propose to study deterministically-safe, dynamically-available consensus in a new model that we call the Iterated, Infinite Authenticated Byzantine model (IIAB).
The model is unrealistic but suitable for our goal of studying consensus in a mathematically regular model.
As an example of what we mean by regularity, the feasibility of deterministically-safe consensus in the IIAB model does not depend on whether the set of processors affected by faults is constant, monotonically increasing, or fluctuating.

The term iterated, as in iterated immediate snapshot (IIS)~\cite{borowsky_simple_1997}, conveys the repetition of synchronous, communication-closed rounds.
Indeed, the IIAB mode is a synchronous, round-by-round computing model where a potentially infinite set of processors are connected by point-to-point links forming a complete graph, and, each round, all messages sent are received by the end of the round.

However, each round $r$, an adversary partitions the processors into a set of offline processors, which do not send messages, and a finite and nonempty set of online processors $O_r$.
This is similar to the sleepy model, except that the set of processors may be infinite in our case.

More surprising is that the processors in the IIAB model never fail; instead, the adversary takes over their communication links as follows.
Each round, the adversary partitions the set of online processors $O_r$ into a nonempty set of well-behaved processors $W_r$ and a set of impersonated processors $F_r$, with $|F_r|< |W_r|$, that the adversary impersonates.
All processors faithfully follow their algorithm and never crash, but each round, the adversary can forge arbitrary messages on behalf of the impersonated processors.

Processors also have access to a strong form of message authentication that allows any processor to check whether a given message was sent by another given processor in a particular round.
Note that this strong form of authentication is unrealistic.
For example, in a system with a public signature scheme and a PKI, once an adversary obtains a private key $k$, the adversary can sign a message with $k$ and claim that it was sent in a previous round before the adversary obtained the key.
This is not possible in the IIAB model.

The IIAB model may seem unrealistic, but it subsumes some models used in previous work on permissionless systems.
For example, the static version of the sleepy model of Pass and Shi~\cite{pass_sleepy_2017} coincides with the IIAB model under the assumption that the set of processes is finite and that, while participation is dynamic, the set of impersonated processors is fixed.
If both online and impersonated processors are fixed and additionally known to the processors, we obtain the traditional, synchronous authenticated Byzantine agreement model.
Thus, solving consensus in the IIAB model solves it in those models too.

Solving consensus in the IIAB model under the minority-corruption assumption is non-trivial.
There are two well-known approaches to solving consensus in synchronous systems with message authentication: the Dolev-Strong algorithm~\cite{dolev_authenticated_1983} and quorum-based approaches, as in the phase king algorithm~\cite{berman_towards_1989}.
Unfortunately, none of those approaches seems to work in the IIAB model.

First, our assumptions about participation make it possible that no well-behaved processor be online in more than one round.
Thus, it is useless for processors to keep a local state.
This rules using the Dolev-Strong algorithm~\cite{dolev_authenticated_1983} because it crucially relies on local state.

Second, due to the 1/2 failure ratio, we can only rely on majority quorums (and no bigger), and it seems that majority quorums cannot prevent disagreement in the IIAB model:
In a given round, even assuming that each processor broadcasts a unique message, it is possible that an online, well-behaved processor $p$ receives a message $m$ from a strict majority of the processors it hears of, while another online, well-behaved processor $p'$ receives a message $m'\neq m$ from a strict majority of the processor it hears of.
For example:
\begin{example}
    \label{ex:contradicting-majs}
    Consider a round $r$ in which the online processors are $\{p_1,p_2,p_3\}$, and suppose that $F_r=\{p_1\}$ is impersonated while $W_r=\{p_2,p_3\}$ is well-behaved.
    Suppose that $p_2$ broadcasts $v$ while $p_3$ broadcasts $v'\neq v$.
    If the adversary sends $v$ to $p_2$ on behalf of $p_1$ and $v'$ to $p_3$ on behalf of $p_1$, then $p_2$ receives $v$ from a strict majority, while $p_3$ receives $v'$ from a strict majority.
\end{example}

\subsection{Summary of the Contributions}

The paper makes the following main contributions.
\begin{enumerate}
    \item We propose the IIAB model. The IIAB model is an abstract model inspired by permissionless systems with fluctuating participation, and it allows one to study consensus in a mathematically regular setting.
        The IIAB model is a generalization of the static version of the sleepy model of Pass and Shi~\cite{pass_sleepy_2017}, which coincides with the IIAB model under the assumption that the set of processes is finite and that, while participation is dynamic, failures are static.
    \item As usual for consensus, the main difficulty lies in solving the commit-adopt task~\cite{gafni_round-by-round_1998} (a graded agreement~\cite{chen_algorand_2019} with two grades).
         In~\Cref{sec:ca}, we propose a commit-adopt algorithm that generalizes the algorithm of Gafni and Zikas~\cite{gafni_synchronyasynchrony_2023}.
    \item We present a probabilistic consensus algorithm relying on a probabilistic leader-election oracle.
        Under reasonable assumptions about the oracle (implementable using verifiable random functions~\cite{micali_verifiable_1999} in the static sleepy model), this algorithm terminates in 20 rounds in expectation.
    \item We present a deterministic consensus algorithm assuming that participation eventually stabilizes, i.e.\ that it remains constant after some unknown round $R$, and that the adversary is growing (i.e., for every round $r$, $F_r\subseteq F_{r+1}$).
            The algorithm terminates in $O(\left|O_R\right|)$ rounds after round $R$.
\end{enumerate}

In supplemental material\cite{losa_nano-odynamic-participation-supplemental_2023}, we also provide TLA+ specification of the algorithms and a mechanically-checked proof in Isabelle/HOL of the key lemma behind the correctness of the commit-adopt algorithm.

\subsection{Technical Outline}

To solve consensus in the IIAB model, we proceed in two steps.
First, we simulate a model that is easier to work with, which we call the no-equivocation IIAB model (no-equivocation model for short), and then we present consensus algorithms in the no-equivocation model.

Compared to the IIAB model, the no-equivocation model places additional restrictions on what the adversary can do: in any given round, the adversary is restricted to sending the same message on all channels on which it sends a message (it cannot equivocate by sending different messages to different processors);
moreover, if the adversary sends a message from processor $p$ to processor $p_1$ but not to processor $p_2$, then the adversary must send a special failure notification, noted $\lambda$, from $p$ to~$p_2$.

Crucially, the properties of the no-equivocation model imply that it is no longer possible for two processors to witness two conflicting strict majorities in the same round, which makes it much easier to devise algorithms based on majority quorums.

We present the no-equivocation model in~\Cref{sec:non-eq} and we give the simulation algorithm, which simulates the no-equivocation model in the IIAB model, in~\Cref{sec:simulation}.

Next, in~\Cref{sec:consensus}, we turn our attention to solving consensus in the no-equivocation model.
We follow the classic pattern, presented in~\Cref{sec:alternation}, consisting in executing an alternating sequence of conciliator and commit-adopt~\cite{gafni_round-by-round_1998} phases.
This pattern, illustrated in~\Cref{fig:alternating-seq}, makes designing consensus algorithms easier by separating concerns.
Each conciliator phase must bring processors into agreement when conditions are favorable (e.g.\ when all processors happen to agree on a good leader, or when the system has stabilized in some way), but is otherwise allowed to fail to bring processors into agreement.
Each commit-adopt phase must detect a preexisting agreement, if there is one, and produce a consensus decision.

In~\Cref{sec:ca}, we present a commit-adopt algorithm that takes 2 no-equivocation rounds.
This algorithm generalizes the algorithm of Gafni and Zikas~\cite{gafni_synchronyasynchrony_2023} to the case of dynamic participation.

In~\Cref{sec:proba-consensus}, we present a probabilistic conciliator which guarantees agreement with probability~1/2.
This conciliator relies on a probabilistic leader-election oracle that, each round and with probability 1/2, provides the identity of a unique well-behaved leader to all well-behaved processors.
In the static sleepy model, this oracle is implementable with verifiable random functions (VRF)~\cite{micali_verifiable_1999}.
Using the probabilistic conciliator, we obtain a consensus algorithm that terminates in 20 rounds in expectation.

Finally, in~\Cref{sec:dolev-strong}, we present a deterministic conciliator for the case of constant participation inspired by the Dolev-Strong algorithm.
This conciliator algorithm is parameterized by a natural number $N>0$, it runs for $N+1$ rounds, and it guarantees agreement if participation is constant and less than $N$ participants are impersonated during the first $N$ rounds (i.e.\ $\left|\bigcup_{0<r\leq N}F_r\right|<N$).

Using this deterministic conciliator, we obtain a consensus algorithm that guarantees deterministic safety and deterministic liveness under a growing adversary (i.e., for every round $r$, $F_r\subseteq F_{r+1}$) when there is an unknown round~$R$, called the stabilization round, after which participation remains constant.
In this case, the algorithm terminates in $O(\left|O_R\right|)$ rounds after round $R$.

\subsection{Previous version of this paper}

Compared to the previous version of the paper, we clarify the exposition, we simplify the proof of the commit-adopt algorithm, and we correct a number of errors:
\begin{itemize}
        \item We remove the erroneous claim that the IIAB model subsumes the sleepy model with dynamic corruptions (the IIAB model only subsumes the static sleepy model).
        \item We correct the deterministic consensus algorithm of~\Cref{sec:dolev-strong}, as well as the conditions under which it guarantees deterministic termination.
\end{itemize}

\section{The IIAB model and the no-equivocation IIAB model}
\label{sec:model}

\subsection{The IIAB model}

We consider a potentially infinite set $\mathcal{P}$ of processors running a distributed algorithm in a synchronous, message-passing system with point-to-point links between every pair of processors.

\subsubsection{Online/offline and well-behaved/impersonated sets.}
Processors execute a sequence of synchronous communication rounds numbered $1,2,3,\dots$.
Each round $r$, a nonempty, finite set $O_r$ of processors is online while the set $\mathcal{P}\setminus O_r$ is offline;
moreover, the set of online processors $O_r$ is partitioned into a nonempty set of well-behaved processors $W_r$ and a set of impersonated processors $F_r$ such that there are fewer impersonated online processors than well-behaved online processors (i.e.\ $|F_r| < |W_r|$).
The sets $O_r$, $W_r$, and $F_r$ are fixed for all rounds $r$ but unknown to the processors.

We think of impersonations as being performed by an adversary.
We say that the adversary is growing when $F_r\subseteq F_{r+1}$ for every round $r$.

\subsubsection{Processor and adversary behavior.}
Operationally, in round $r=1$, each processor $p$ (whether online or offline) first receives an external input $in_p$.
Then, each round $r$ consists of a send phase followed by a receive phase.
In the send phase, each online processor sends a set of messages determined, as prescribed by the algorithm, by the messages it received in the previous round and the output of queries it makes to some oracles, if $r>1$, or solely by its external input if $r=1$.
However, the adversary removes all outgoing messages of online impersonated processors (the set $F_r$) and injects arbitrary messages, subject to the restrictions below, in the outgoing communication links of impersonated processors.
Finally, offline processors do not send messages.

In the receive phase of each round $r$, every processor (whether online or not) receives all the messages sent to it during the send phase of the same round $r$.
Moreover, each processor may optionally produce an irrevocable output, as prescribed by the algorithm, that depends on the set of messages it receives in the round and on the output of any oracle calls it may make.

\subsubsection{Message authentication}

Messages may be arbitrary, except that there is a type of signed messages with special restrictions.
Signed messages are of the form $\langle p,r,m\rangle$, where $p$ is the signer of the signed message, $r$ the round of the signed message, and $m$ the message content of the signed message.
Moreover, in a each round $r$, the adversary can only send a signed message $\langle p,r',m\rangle$ if $r'=r$ and it impersonates the signer $p$ of the message in round $r$, unless $\langle p,r',m\rangle$ has been sent in a previous round.
Formally, if a signed message $\langle p,r,m\rangle$ appears (either on its own or included in another message) on a communication link $(p_1,p_2)$ in round $r'$, then:
\begin{itemize}
        \item If $p_1\in W_{r'}$, then either $p=p_1$ and $r'=r$, or $r<r'$ and $p_1$ received $\langle p,r,m\rangle$ (either directly or as part of another message) in a round $r''<r'$.
        \item If $p_1\in F_{r'}$, then either $p\in F_r$ and $r'=r$, or $r< r'$ and the signed message $\langle p,r,m\rangle$ was sent on a communication link in a round $r''<r'$.
\end{itemize}

One consequence of the message restrictions above is that, if the adversary impersonates a processor $p$ in round $r$ but not in round $r'\neq r$, then the adversary cannot credibly claim that $p$ sent a message $m$ in round $r'$ when $p$ actually did not send that message.
This in contract with a system based on public-key authentication, where, once the private key of a processor $p$ is compromised, the adversary can credibly claim that $p$ sent messages in the past when $p$ in fact did not.

\subsubsection{The leader-election oracle}
When indicated, processors have access to a probabilistic leader-election oracle offering a local interface to each processor.

Each round $r$, the leader-election oracle can be queried by each processor $p$ and returns the identity of a processor $l_p^r$, called a leader, with the guarantee that, with probability 1/2, there is a well-behaved processor $l_r$ such that every processor $p$ obtains the same leader $l_p^r=l_r$.

The leader-election oracle is an abstraction of using verifiable random functions~\cite{micali_verifiable_1999} (VRF) to elect a leader.
VRFs can be used to elect a leader among a finite set of processors by electing the processor with largest VRF output; if a minority of processors selectively reveal their outputs to others, all well-behaved processors nevertheless agree on the same leader with probability 1/2.

\subsubsection{Composing Algorithms}

In later sections, we construct algorithms by sequentially composing sub-algorithms.
In such cases, the sub-algorithms all use a fixed number of rounds, and thus composition only amounts to executing the algorithms in a sequence, feeding the outputs of an algorithm as inputs to the next.

\subsubsection{Useful terminology.}
We say that $p$ hears of $q$ in round $r$ when $p$ receives some message on the link $(p,q)$ in the receive phase of round $r$.
We say that $p$ receives a message $m$ from a strict majority in round $r$ when $p$ receives $m$ from strictly more than half of the processors it hears of in round $r$.

\subsubsection{A note about strict majorities}

It is important to note that a set of processors $P$ may be a strict majority of the processors that a processor $p$ hears of in round $r$ without $P$  necessarily being a strict majority of the online processors.
\begin{example}
    Take 4 processors $p_1$ to $p_4$ with $W_r=\{p_1,p_2,p_3\}$ and $F_r=\{p_4\}$ (so $O_r=\{p_1,p_2,p_3,p_4\}$).
    Suppose that $p_1$ hears of $\{p_1,p_2,p_3\}$ only in round $r$.
    Then $P=\{p_1,p_2\}$ is a strict majority of the processors that $p_1$ hears of in round $r$, but it is not a strict majority of the online processors $\{p_1,p_2,p_3,p_4\}$.
\end{example}
Similarly, as shown in~\Cref{ex:contradicting-majs} in the introduction, even supposing that each processor broadcasts a unique message in some round $r$, it is possible that two processors each hear from two different messages from a strict majority.
However, a useful property on which we can rely is that if $p$ receives a message $m$ from a strict majority in a round $r$, then at least one well-behaved processor sent $m$ in round $r$.

\subsection{Relationship to other models}
\label{sec:other-models}

Special cases of the IIAB model correspond to existing models, if we assume there are finitely many processors:
\begin{enumerate}
    \item If $F_r$ remains constant, then we obtain the static version of the Sleepy model of Pass and Shi~\cite{pass_sleepy_2017}.
    \item If we assume that both $W_r$ and $F_r$ remain constant (that is, $F_r=F_1$, $W_r=W_1$, and thus $O_r=O_1$ for every $r$), and that both $W_r$ and $F_r$ are known to the processors, then we obtain the traditional synchronous, authenticated Byzantine agreement model.
    \item If $O_r$ remains constant, $|F_r|\leq 1$, and the adversary is restricted to just dropping messages (and not forging messages), then we obtain the model of Santoro and Widmayer~\cite{santoro_time_1989}
\end{enumerate}

Note that the sleepy model with dynamic corruptions has no equivalent in the IIAB model.
This is because the message restrictions we place on signed messages are not implementable with public-key cryptography: once the adversary obtains a key, it can forge messages and undetectably claim they were sent in previous rounds.
Our message restrictions prevents this.

\subsection{Solvability of consensus in the IIAB model}

Santoro and Widmayer~\cite{santoro_time_1989} show that deterministic consensus (with both deterministic safety and deterministic liveness) is impossible even if we restrict the adversary a) to impersonating only one processor each round and b) to only dropping but never fabricating messages.
Thus, deterministic consensus is impossible in the IIAB model without additional assumptions.

\subsection{The no-equivocation IIAB Model}
\label{sec:non-eq}

We now place additional restrictions on the IIAB model to obtain what we call the no-equivocation IIAB model (no-equivocation model, for short).

First, the no-equivocation model requires processors to only broadcast a unique message per round (when, in the IIAB model, processors are free to send any number of messages on each of their outgoing links).

Second, the no-equivocation model places additional restrictions on the adversary: each round $r$, for each impersonated processor $p$, the adversary picks a single message $m$ and sends $m$ on some of $p$'s outgoing channels (possibly none); moreover, should the adversary send $m$ from $p$ to some processors but not others, then the adversary must send the special failure notification $\lambda$ to all the processors to which it does not send $m$; finally, the adversary can also send $\lambda$ on some of $p$'s outgoing channels and remain silent on others.

Thus, for each round $r$ and each processor $p$, all the following combinations are possible:
\begin{enumerate}
    \item No processor hears of $p$ (possible when $p$ is offline, or $p \in F_r$ and the adversary just drops all of $p$'s messages), or
    \item All processors receive the same message $m$ that $p$ actually sent (possible when $p$ is online well-behaved or the adversary acts as if $p$ were well-behaved), or
    \item There is a message $m$ (possibly different from the messages that $p$ actually broadcast) such that all processors receive a message from $p$, but some processors receive $m$ for $p$ and others receive $\lambda$ (possible when $p$ is impersonated), or
    \item Some processors receive $\lambda$ for $p$ and some do not hear at all of $p$ (possible when $p$ is impersonated).
\end{enumerate}

\section{Simulating the no-equivocation model}
\label{sec:simulation}

In this Section, we show how to implement, in the IIAB model, the no-equivocation model defined in~\Cref{sec:non-eq}.

\begin{algorithm}[Simulation of no-equivocation model]
    We simulate one round of the no-equivocation model using two consecutive rounds of the IIAB model.
    \label{algo:non-eq}
    \begin{enumerate}[label=(\alph*)]
        \item In the first round, each online processor $p$ does the following: for each message $m$ that $p$ must simulate the broadcasting of, $p$ broadcasts the signed message $\tuple{p,1,m}$.
        \item In the second round, for each process $p'$ and each signed message $\tuple{p',1,m}$ received from $p'$, each online processor $p$ forwards $\tuple{p',1,m}$ to all (thereby relaying that $p'$ sent $m$ in round 1) by broadcasting $\tuple{p,2,\tuple{p',1,m}}$.
            We say that $\tuple{p,2,\tuple{p',1,m}}$ is a claim from $p$ that $p'$ sent message $m$ in round 1.
        \item At the end of the second round, each processor $p$ simulates receiving the following messages.
            For each processor $p'$ such that $p$ receives a claim that $p'$ sent a message $m$ in round 1, $p$ simulates the delivery of the following message:
            \begin{enumerate}[label=(\roman*)]
                \item If $p$ receives claims that $p'$ sent $m$ in round 1 from a strict majority in round 2 and $p$ does not receive any claim that $p'$ sent any other message $m'\neq m$ in round 1, then $p$ simulates receiving $m$ from $p'$.\label{item:no-eq-sim-i}
                \item Else $p$ simulates receiving $\lambda$ from $p'$\label{item:no-eq-sim-ii}.
            \end{enumerate}
    \end{enumerate}
\end{algorithm}

\begin{theorem}
    \Cref{algo:non-eq} correctly simulates the no-equivocation model.
    \label{thm:non-eq-correct}
\end{theorem}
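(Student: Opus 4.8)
The plan is to show that whatever the IIAB adversary does against \Cref{algo:non-eq}, the sequence of \emph{simulated} deliveries it induces is a legal execution of the no-equivocation model. Fix a simulated round, played out over its two IIAB rounds, which I call round~1 and round~2 as in \Cref{algo:non-eq}, with participation sets $O_1,W_1,F_1$ and $O_2,W_2,F_2$. I let the corresponding no-equivocation round inherit its online, well-behaved and impersonated sets from round~1 (so the bound $|F_1|<|W_1|$ and the nonemptiness conditions carry over immediately), and I let the simulated inputs and outputs pass through verbatim. What remains is to check, for every processor $p'$ whose broadcast is being simulated, that the simulated deliveries concerning $p'$ realize one of the four combinations listed in \Cref{sec:non-eq}, and in particular combination~(2) whenever $p'\in W_1$.

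I would first isolate three facts. \emph{(a)} In any IIAB round, every processor --- online \emph{or offline} --- hears of every processor that is well-behaved in that round, because well-behaved processors broadcast; moreover every processor in $W_2$ relays a nonempty set of claims, since in round~1 it received (being offline does not prevent receiving) the messages of $W_1\neq\emptyset$. \emph{(b)} Whenever $W_r\subseteq H\subseteq O_r$ we have $|H\cap F_r|<|H|/2$; consequently a set of claims coming only from processors in $F_2$ can never be a strict majority, and every strict-majority set of claims contains a claim from some processor in $W_2$. \emph{(c)} By the message-authentication restrictions, a signed message $\tuple{p',1,c}$ can come into existence only in round~1, and only through $p'$ itself sending it when $p'\notin F_1$; hence if $p'$ is offline in round~1 no such message exists, and if $p'\in W_1$ the only one ever in circulation is $\tuple{p',1,m_{p'}}$, where $m_{p'}$ is the unique message $p'$ broadcasts.

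Next I would do a case analysis on the set $C$ of messages $c$ such that some processor in $W_2$ received $\tuple{p',1,c}$ in round~1. If $|C|\ge 2$, then by~(a) every processor hears of two members of $W_2$ relaying conflicting claims about $p'$, so no processor can satisfy the ``no other message'' condition of case~(i), and every processor simulates $\lambda$ for $p'$ --- a legal no-equivocation pattern. If $C=\emptyset$, every claim about $p'$ comes from a processor in $F_2$, so by~(b) no processor is in case~(i) and every processor simulates $\lambda$ or hears nothing of $p'$, matching combinations~(1) and~(4); and if $p'$ is offline or in $W_1$ then, by~(c), no claim about $p'$ exists at all and we are in combination~(1). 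If $C=\{m^*\}$, let $G\subseteq W_2$ be the (nonempty) set of processors relaying $m^*$; by~(a) every processor receives at least the claim ``$p'$ sent $m^*$'', hence every processor simulates either $m^*$ (when it sees a strict majority of $m^*$-claims and no conflicting claim) or $\lambda$, but never ``nothing'' --- matching combinations~(2) and~(3). That no two processors simulate two different non-$\lambda$ values follows because being in case~(i) forces $|C|\le 1$. Finally, if $p'\in W_1$ then every processor --- even an offline one --- receives $\tuple{p',1,m_{p'}}$ in round~1, so $G$ is all of $W_2$ and, by~(b), forms a strict majority of whatever set each processor hears of in round~2; by~(c) no conflicting claim exists; so every processor simulates $m_{p'}$, which is exactly combination~(2).

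The step I expect to be the main obstacle is reconciling the a priori different participation sets of round~1 and round~2: a processor well-behaved in round~2 but absent in round~1 would, naively, relay nothing about $p'$ and could thereby destroy the strict majority needed for faithful delivery of an honest broadcast. The resolution --- and the real content of the proof --- is fact~(a): since offline processors still receive every message addressed to them, an honest broadcast in round~1 reaches \emph{every} processor and is therefore relayed by \emph{every} processor in $W_2$; this is also the reason the no-equivocation round must inherit its well-behaved set from round~1 rather than round~2. The converse direction (every no-equivocation execution is produced by some IIAB adversary) is routine: to realize any prescribed deliveries for $p'$, the adversary sends the matching message, $\lambda$, or nothing on $p'$'s behalf in round~1 and relays faithfully in round~2.
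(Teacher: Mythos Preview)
Your proof is correct. The three facts (a)–(c) you isolate are exactly the ingredients the paper uses, but you organize them differently: the paper reduces the theorem to two invariants---(P1) if some processor simulates a non-$\lambda$ value $m$ from $q$ then every processor simulates $m$ or $\lambda$ from $q$, and (P2) if $q\in W_1$ then every processor simulates $q$'s input---and proves each in two or three lines; you instead do an exhaustive case split on the cardinality of $C$ (the set of values that members of $W_2$ relay for $p'$) and match each case to one of the four combinations of \Cref{sec:non-eq}. Your argument that ``being in case~(i) forces $|C|\le 1$'' is precisely the contrapositive of the paper's proof of (P1), and your $p'\in W_1$ paragraph is its proof of (P2). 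What your route buys is an explicit verification that the simulated deliveries always match one of the four listed patterns, and it surfaces the subtlety about the participation mismatch between rounds~1 and~2 (your ``main obstacle''), which the paper leaves implicit; what the paper's route buys is brevity, since (P1) and (P2) already characterize the admissible delivery patterns. Your final paragraph on the converse direction is not needed for the theorem as stated and the paper does not address it either.
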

\begin{proof}
    By the definition of the no-equivocation model, it suffices to show that:
    \begin{enumerate}
        \item If $p_1$ simulates receiving $m\neq \lambda$ from $q$ then each processor $p_2$ simulates receiving either $m$ or $\lambda$ for $q$.\label{pf:non-eq-correct-case-1}
        \item If $q$ is a well-behaved processor in round 1, then every processor simulates receiving $q$'s input from~$q$.\label{pf:non-eq-correct-case-2}
    \end{enumerate}
    We start with~\Cref{pf:non-eq-correct-case-1}.
    Suppose that $p_1$ simulates receiving $m\neq\lambda$ for processor $q$ and consider a processor $p_2$.
    Since $p_1$ simulates delivering $m$ from $q$, $p_1$ receives claims that $q$ send message $m$ in round 1 from a strict majority of the processors it hears of in round 2.
    Therefore, at least one well-behaved processor made the claim, and therefore $p_2$ also receives the claim and, by~\Cref{item:no-eq-sim-i,item:no-eq-sim-ii} above, $p_2$ simulates receiving either $m$ or $\lambda$ from $q$.

    Next, we turn to~\Cref{pf:non-eq-correct-case-2}.
    Consider a processor $q$ that is well-behaved in round 1 with input message $m$.
    Since $q$ is well-behaved, every processor receives $m$ from $q$ by the end of round 1.
    Then, in round 2, all the processors that are well-behaved in round 2 broadcast a claim that $q$ sent $m$ in round 1.
    Moreover, by the message restrictions of the IIAB model, no processor can broadcast a claim that $q$ sent a message different from $m$ in round 1.
    Thus, at the end of round 2, all processors simulate receiving $m$ from $q$.
\end{proof}

\section{Solving Consensus in the no-equivocation model}
\label{sec:consensus}

We start by defining the consensus problem.

\begin{definition}
    \label{def:consensus}
    In the consensus problem, every processor receives an external input and may produce an output such that:
    \begin{itemize}
        \item Agreement: No two processors produce different outputs.
        \item Validity: If all processors have the same input $v$, no processor outputs $v'\neq v$.
    \end{itemize}
    Moreover, one of the following termination properties must additionally be satisfied.
    \begin{itemize}
        \item Eventual termination: eventually, all processors output a value.
        \item Probabilistic termination: with probability 1, all processors eventually output.
    \end{itemize}
\end{definition}
Note that probabilistic termination allows some executions in which some processors never output.
However, those executions must have probability zero.
Moreover, note that agreement and validity must be guaranteed in all executions (so, an algorithm like Nakamoto consensus does not satisfy our definition of consensus).

\subsection{Solving consensus by alternating conciliators and commit-adopt}
\label{sec:alternation}

To solve consensus, we use the classic approach consisting in alternating between a conciliator phase and a commit-adopt~\cite{gafni_round-by-round_1998} phase (commit-adopt is a graded agreement~\cite{chen_algorand_2019} with two grades), starting with a conciliator.
\Cref{fig:alternating-seq} visually depicts the approach.

In a conciliator phase, processors try to reach agreement on a value, but they might fail to do so and end up in disagreement.
In a commit-adopt phase, processors commit to a decision if they successfully reached agreement in the previous conciliator phase.
As we will see, the exact properties that conciliator and commit-adopt phases must satisfy guarantee the validity and agreement properties of consensus.
However, the termination property that is achieved depends on the conditions under which processors are guaranteed to successfully reach agreement in the conciliator phase; this depends on the algorithm and on additional assumptions (like the availability of the leader-election oracle or the bounded-failures assumption).

\begin{figure}[ht]
        \caption{Infinite alternating sequence of conciliator and commit-adopt phases. Horizontal arrows represent processors locally taking their output from one phase and using it as input to the next phase; vertical arrows represent processors possibly producing a consensus decision. Processors keep executing and move from one phase to the next regardless of whether they have produced an output or not.}
        \begin{center}
            \includegraphics[width=\textwidth]{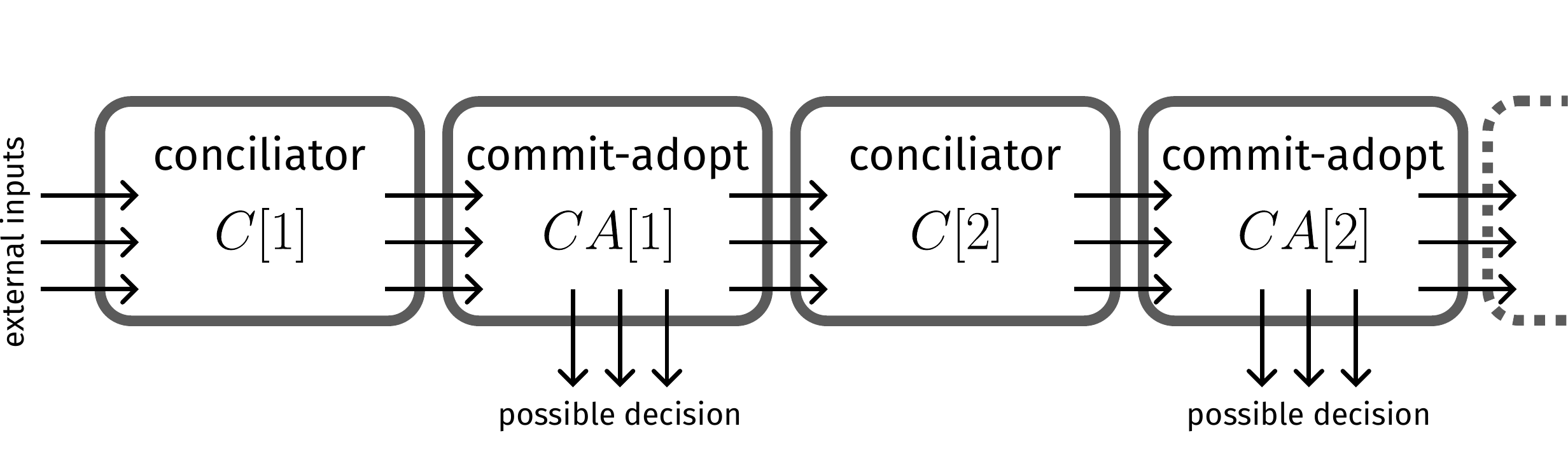}
        \end{center}
        \label{fig:alternating-seq}
\end{figure}

Formally, a commit-adopt algorithm must obey the following specification.
\begin{definition}
    \label{def:ca}
    In the commit-adopt task, each processor starts with an input value and must produce an output of the form $commit(v)$ or $adopt(v)$, for some value $v$, subject to three conditions:
    \begin{itemize}
        \item[Agreement] If a processor outputs $commit(v)$ then every processor must output $commit(v)$ or $adopt(v)$.
        \item[Validity] If all processors have the same input $v$, then every processor must output $commit(v)$.
        \item[Termination] There is a round $R$ such that all processors output in round $R$.
    \end{itemize}
\end{definition}

A conciliator algorithm must obey the following specification.
\begin{definition}
    \label{def:conciliator}
    In the conciliator task, each processor starts with an input value and must produce an output subject to the two conditions below.
    \begin{itemize}
        \item[Validity] If there is a value $v$ such that all processors have input $v$, then all processors output $v$.
        \item[Termination] There is a round $R$ such that all processors output in round $R$.
    \end{itemize}
\end{definition}
Note that the conciliator algorithms that we present in later sections additionally satisfy progress properties, described in~\Cref{sec:proba-consensus,sec:dolev-strong}, that require the conciliator to produce agreement with some probability or, depending on some parameter of the algorithm, deterministically.

We now give a generic algorithm that we define in terms of two arbitrary conciliator and commit-adopt algorithms.
\begin{algorithm}[Generic consensus algorithm]
    \label{algo:gen-consensus}
    We use an infinite sequence $C[1,2,\dots]$ of conciliator instances and an infinite sequence $CA[1,2,\dots]$ of commit-adopt instances.
    In the generic consensus algorithm, each round $r$, each online processor $p$ does the following.
    \begin{enumerate}[label=(\alph*)]
        \item If $r=1$, invoke $C[1]$ with $p$'s external input.
        \item Upon obtaining an output $o$ from the conciliator $C[n]$, for some $n$, invoke $CA[n]$ with $o$ as input.
        \item Upon obtaining an output $o$ from the commit-adopt $CA[n]$, for some $n$:
            \begin{itemize}
                \item If $o=\commit(v)$ for some $v$, then output $v$ as consensus decision.
                \item Invoke $C[n+1]$ with the value $v$ such that $o=\commit(v)$ or $o=\adopt(v)$.
            \end{itemize}
    \end{enumerate}
\end{algorithm}

Informally, let us reiterate that the goal of the conciliator phase is to bring processors into agreement when conditions are favorable (e.g.\ when all processors happen to agree on a good leader, or when the system has stabilized in some way), while the goal of the commit-adopt phase is to detect a pre-existing agreement and produce a consensus decision.

\begin{theorem}
    \Cref{algo:gen-consensus} satisfies the validity and agreement properties of consensus.
\end{theorem}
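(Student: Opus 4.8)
The plan is to argue entirely from the black-box specifications: the three commit-adopt properties of \Cref{def:ca} (Agreement, Validity, Termination) and the conciliator Validity property of \Cref{def:conciliator}, never opening up either sub-algorithm. The key object is a \emph{propagation invariant}: if some processor outputs $\commit(v)$ from $CA[n]$, then every processor feeds $v$ into $C[n+1]$. This is immediate, since commit-adopt Agreement forces every processor to output $\commit(v)$ or $\adopt(v)$ from $CA[n]$, and in \Cref{algo:gen-consensus} the value handed to $C[n+1]$ is exactly the value wrapped by that $\commit$ or $\adopt$.

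From the propagation invariant I would prove, by induction on $k \ge n+1$, that if $v$ is \emph{committed in instance $n$} (meaning some processor outputs $\commit(v)$ from $CA[n]$), then every processor outputs $\commit(v)$ from $CA[k]$. Base case $k=n+1$: every processor feeds $v$ into $C[n+1]$, so by conciliator Validity every processor outputs $v$ from $C[n+1]$, hence feeds $v$ into $CA[n+1]$, hence by commit-adopt Validity outputs $\commit(v)$ from $CA[n+1]$. The inductive step is the identical argument with the pair $(C[n+1],CA[n+1])$ replaced by $(C[k+1],CA[k+1])$, using that every processor outputs $\commit(v)$ from $CA[k]$.

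For Agreement, suppose processor $p$ decides $v$ because $CA[n]$ returned $\commit(v)$ to it and processor $q$ decides $v'$ because $CA[m]$ returned $\commit(v')$ to it; without loss of generality $n \le m$. If $n = m$, apply commit-adopt Agreement to $p$'s commit: $q$ must output $\commit(v)$ or $\adopt(v)$ from $CA[n]$, and since each processor produces a single output per commit-adopt instance, that output is $\commit(v')$, forcing $v'=v$. If $n<m$, the induction above applies (with $m \ge n+1$) to give that $q$ outputs $\commit(v)$ from $CA[m]$, and uniqueness of $q$'s output from $CA[m]$ again yields $v'=v$; taking $p=q$ also handles a single processor deciding in several instances. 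For Validity, assume all processors start with the same input $v$: conciliator Validity gives that every processor outputs $v$ from $C[1]$, so by commit-adopt Validity $v$ is committed in instance $1$, and by the induction every processor outputs $\commit(v)$ from every $CA[k]$, so $v$ is the only value any processor can ever decide.

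I do not expect a genuine obstacle — the whole thing is a routine induction — but the one spot to handle with care is the Agreement case in which the two deciding processors commit in \emph{different} instances. The point is not to relate those two instances directly but to route through the invariant ``a committed value propagates to all later instances and is unique within each,'' together with the fact that a processor has exactly one output per $CA$ instance, so that ``$q$ outputs $\commit(v)$'' and ``$q$ outputs $\commit(v')$'' from the same instance are contradictory unless $v=v'$.
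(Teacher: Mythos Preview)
Your proposal is correct and follows essentially the same approach as the paper: both hinge on the propagation observation that once some processor commits $v$ in $CA[n]$, conciliator Validity and commit-adopt Validity force everyone to commit $v$ in $CA[n+1]$, iterated forward. Your version is simply more carefully spelled out---you make the induction explicit, separate the $n=m$ and $n<m$ cases for Agreement, and for Validity you carry the argument through all later instances rather than stopping at $CA[1]$---but the underlying decomposition and key lemmas are identical.
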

\begin{proof}
    Let us start with the validity property.
    Suppose that all processors receive the same external input $v$.
    The, by property of conciliators, all processors output $v$ in the first conciliator $C[1]$.
    Thus, by property of commit-adopt, all processors decide $v$ in the first commit-adopt $CA[1]$, and we are done.

    For the agreement property, note that, by the specification of conciliators and commit-adopt, once a processor outputs $\commit(v)$, for some $v$, from a commit-adopt instance $CA[n]$, then all processors input $v$ to the next conciliator $C[n+1]$; thus all processors output $v$ in $C[n+1]$, which means all processors commit $v$ in the commit-adopt $C[n+1]$ and all decide $v$.
\end{proof}

We will discuss how we achieve the termination properties of consensus in later sections, when we present the respective conciliator algorithms (see~\Cref{thm:expected-rounds,thm:det-consensus}).

\subsection{Implementing commit-adopt in the no-equivocation model}
\label{sec:ca}

\begin{algorithm}[Commit-adopt]
    \label{algo:ca}
    The algorithm takes 2 rounds of the no-equivocation model.
    \begin{enumerate}[label=(\alph*)]
        \item In the first round, each online processor broadcasts its input $in_p$.
        \item In the second round, each online processor $p$ broadcasts propose-commit$(v)$ if it hears $v$ from a strict majority; otherwise, $p$ broadcasts no-commit.
        \item At the end of the second round, each processor $p$ determines its output as follows:
            \begin{enumerate}[label=(\roman*)]
                \item If $p$ received propose-commit$(v)$ from a strict majority, for some $v$, then it outputs $commit(v)$.\label{ca:output-case-1}
                \item Else, if $p$ received a value $v$ from at least one processor and from strictly more processors than any other value, then $p$ outputs $adopt(v)$.\label{ca:output-case-2}
                \item Else, $p$ outputs adopt$(in_p)$.\label{ca:output-case-3}
            \end{enumerate}
    \end{enumerate}
\end{algorithm}

Although the algorithm is simple, it is no straightforward to prove it correct.
We begin with some useful properties of the no-equivocation model.

\subsubsection{Key Properties of the no-equivocation model}

In this Section we present~\Cref{thm2,thm3}, which capture key properties of the no-equivocation model that we will use in the next subsection to prove the algorithm correct.

First, in~\Cref{thm2}, we show that no two processors can receive two different messages from the same processor in the same round.
Thus, despite unknown participation, we can rely on a property similar to the quorum-intersection property that many classic BFT algorithms rely on.

Second, in~\Cref{thm3}, we show that if a processor receives $m$ from a strict majority in round $r$ and if no processor broadcasts $m'$ in round $r$, then every processor receives $m$ strictly more often than $m'$ in round $r$.
We use this property to ensure that, should a processor act after hearing from a unanimous strict majority, other processors have an indication that this might have happened.
This is similar to the property, used in classic BFT algorithms, that if a processor hears from a unanimous super-majority, i.e.\ $2f+1$ out of $3f+1$, then all processors eventually hear from $f+1$ correct processors out of those $2f+1$.

Let us start with some notational conventions.
When a round $r$ is clear form the context, let $H_{p}$ be the set of processors that $p$ hears of in $r$ and let $H_{p'}$ be the set of processors that a processor $p'$ hears of in $r$.

We first need a simple lemma that we use to prove~\Cref{thm2,thm3}.

\begin{lemma}
    \label{l1}
    Consider a round $r$.
    Suppose that a processor $p$ receives a message $m$ from each member $q$ of a set $P_m$ of processors.
    Then, for every processor $p'$, we have:
    \begin{enumerate}
        \item $P_m\subseteq H_{p}\cap H_{p'}$, and
        \item if $P_m$ is a strict majority of $H_{p}$, then $P_m$ is a strict majority of $H_{p}\cap H_{p'}$.
    \end{enumerate}
\end{lemma}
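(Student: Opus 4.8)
The plan is to prove the two parts in order, deriving part~2 from part~1. Part~1 is where the restrictions of the no-equivocation model are essential --- in the plain IIAB model a processor could send a message to $p$ and nothing to $p'$, so the statement would be false there --- while part~2 is then a short counting argument that uses nothing but part~1.

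For part~1, I would first note that $P_m \subseteq H_p$ is immediate: by hypothesis $p$ receives $m$ from every $q \in P_m$, so $p$ hears of $q$ and hence $q \in H_p$. It remains to show $P_m \subseteq H_{p'}$ for an arbitrary processor $p'$. Fix $q \in P_m$ and apply the case analysis of the per-processor delivery outcomes in the no-equivocation model (the four cases listed in~\Cref{sec:non-eq}). Since $p$ receives from $q$ a genuine message $m$ (as opposed to the failure notification $\lambda$), we are neither in the case where no processor hears of $q$ nor in the case where every delivered message for $q$ is $\lambda$ or nothing; hence we are in one of the two cases in which \emph{every} processor receives some message (either $m$ or $\lambda$) from $q$. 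In particular $q \in H_{p'}$. As $q$ was arbitrary, $P_m \subseteq H_{p'}$, and combined with $P_m \subseteq H_p$ this gives $P_m \subseteq H_p \cap H_{p'}$.

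For part~2, suppose $P_m$ is a strict majority of $H_p$, i.e.\ $|P_m| > |H_p \setminus P_m|$ (all the sets here are finite, since $O_r$ is finite). By part~1, $P_m \subseteq H_p \cap H_{p'} \subseteq H_p$, so $(H_p \cap H_{p'}) \setminus P_m \subseteq H_p \setminus P_m$, whence $|(H_p \cap H_{p'}) \setminus P_m| \le |H_p \setminus P_m| < |P_m|$. Since $H_p \cap H_{p'}$ is the disjoint union of $P_m$ and $(H_p \cap H_{p'}) \setminus P_m$, this is exactly the statement that $P_m$ is a strict majority of $H_p \cap H_{p'}$.

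The only subtlety, and the step I would be most careful about, is the tacit assumption used in part~1 that $m \neq \lambda$: if $p$ were allowed to receive $\lambda$ from $q$ while $p'$ receives nothing from $q$, part~1 would fail. I would make this explicit in the statement, which is harmless because every application of the lemma in the commit-adopt proof instantiates $P_m$ as the set of senders of a concrete value. Beyond that, no single step is hard; the content of the lemma lies entirely in correctly invoking the no-equivocation delivery guarantees.
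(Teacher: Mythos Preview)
Your proof is correct and follows essentially the same approach as the paper's: part~1 uses the no-equivocation delivery guarantees to conclude that every $q\in P_m$ is heard by $p'$, and part~2 is the straightforward counting step that the paper dismisses as ``simple properties of sets''. Your explicit attention to the assumption $m\neq\lambda$ is actually more careful than the paper, which leaves it implicit in the sentence ``if $p'$ does not receive $m$ then $p'$ receives $\lambda$ for $q$''; the hypothesis is indeed needed (case~4 of the model would otherwise break part~1) and, as you note, harmless in every application.
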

\begin{proof}
    We start with Item 1.
    First note that $P_m\subseteq H_{p}$ by definition.
    By definition of the no-equivocation model, for every $q\in P_m$, if $p'$ does not receive $m$ then $p'$ receives $\lambda$ for $q$.
    Hence, $p'$ hears of every processor $q$ such that $p$ receives $m$ from $q$, i.e.\ $P_m\subseteq H_{p'}$.
    Thus, we have $P_m\subseteq H_{p}\cap H_{p'}$.

    For Item 2, note that we have just established that $P_m\subseteq H_{p}\cap H_{p'}$.
    Therefore, if $P_m$ is a strict majority of $H_{p}$, by simple properties of sets, we get that $P_m$ is a strict majority among $H_{p}\cap H_{p'}$.
\end{proof}

\begin{theorem}
    \label{thm2}
    Consider a round $r$.
    For every two different value $m_1$ and $m_2$, if a processor $p$ receives $m_1$ from a strict majority, then no processor $p'$ receives $m_2$ from a strict majority.
\end{theorem}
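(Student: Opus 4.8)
The plan is to argue by contradiction and to use \Cref{l1} to compare the two alleged strict majorities against one common, finite ground set, at which point the no-equivocation restriction forces a collision on some processor.

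Concretely, suppose toward a contradiction that $m_1\neq m_2$ are two genuine (i.e.\ non-$\lambda$) messages, that $p$ receives $m_1$ from a set $P_1$ that is a strict majority of $H_{p}$, and that $p'$ receives $m_2$ from a set $P_2$ that is a strict majority of $H_{p'}$. Since only online processors (or the adversary impersonating online processors) send in round $r$, both $H_{p}$ and $H_{p'}$ are subsets of the finite set $O_r$, so $H_{p}\cap H_{p'}$ is finite. First, apply \Cref{l1}(2) to $p$, the set $P_1$, and the other processor $p'$ to get that $P_1$ is a strict majority of $H_{p}\cap H_{p'}$; symmetrically, applying \Cref{l1}(2) to $p'$, the set $P_2$, and the other processor $p$ gives that $P_2$ is a strict majority of $H_{p'}\cap H_{p}=H_{p}\cap H_{p'}$.

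Now $P_1$ and $P_2$ are two strict majorities of one and the same finite set $S:=H_{p}\cap H_{p'}$, so $|P_1|+|P_2|>|S|$ with $P_1,P_2\subseteq S$, whence $P_1\cap P_2\neq\emptyset$. Fix any $q\in P_1\cap P_2$. Then, in round $r$, $p$ receives $m_1$ from $q$ while $p'$ receives $m_2$ from $q$, with $m_1$ and $m_2$ distinct and non-$\lambda$. This contradicts the defining property of the no-equivocation model (\Cref{sec:non-eq}): in any round, for each processor $q$ there is a single message that $q$ actually broadcasts, if well-behaved, or that the adversary commits to, if impersonated, and every processor that receives a non-$\lambda$ message from $q$ receives exactly that message.

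I expect the crux to be the double, symmetric use of \Cref{l1}(2). Knowing only that $P_1$ is a strict majority of $H_{p}$ and $P_2$ a strict majority of $H_{p'}$ is not enough to force $P_1\cap P_2\neq\emptyset$, since $H_{p}$ and $H_{p'}$ may differ; the point is that, because $p$ actually \emph{received} $m_1$ from each member of $P_1$ (and likewise $p'$ from $P_2$), \Cref{l1} lets us push both $P_1$ and $P_2$ down to strict majorities of the common finite set $H_{p}\cap H_{p'}$, and it is only over this common set that the counting argument applies. The remaining steps — the pigeonhole intersection and the appeal to no-equivocation — are routine.
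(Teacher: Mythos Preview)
Your proof is correct and follows essentially the same approach as the paper: both argue by contradiction, apply \Cref{l1}(2) symmetrically to place the two alleged majorities inside the common set $H_{p}\cap H_{p'}$, and then invoke no-equivocation to obtain the contradiction. The only cosmetic difference is that the paper first observes that no-equivocation forces the two sets to be disjoint and then notes that two disjoint strict majorities of one set are impossible, whereas you first intersect the two strict majorities and then contradict no-equivocation at the common element; these are the same argument in reverse order.
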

\begin{proof}
    Suppose by contradiction that $p$ receives $m_1$ from a strict majority $Q_1$ of the processors it hears of and that $p'$ receives $m_2$ from a strict majority $Q_2$ of the processors it hears of.
    Because the adversary cannot equivocate in the no-equivocation model, we have that $Q_1$ and $Q_2$ are disjoint.

    However, by~\Cref{l1}.2, $Q_1$ consists of a strict majority of $H_{p}\cap H_{p'}$.
    Similarly, $Q_2$ consists of a strict majority of $H_{p}\cap H_{p'}$.
    This contradicts the fact that $Q_1$ and $Q_2$ are disjoint.
\end{proof}

\begin{theorem}
    \label{thm3}
    Consider a round $r$.
    Suppose that a processor $p$ receives $m_1$ from a strict majority.
    Moreover, suppose that no processor broadcasts $m_2\neq m_1$ in round $r$.
    Then, for every processor $p'$, if $m_2\neq \lambda$, then $p'$ receives $m_2$ from strictly less processors than it receives $m_1$ from.
\end{theorem}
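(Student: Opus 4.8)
The plan is to fix an arbitrary processor $p'$ and compare, in round $r$, the set $A$ of processors from which $p'$ receives $m_1$ with the set $B$ of processors from which $p'$ receives $m_2$; we must show $|B|<|A|$. Write also $Q_1$ for the set of processors from which $p$ receives $m_1$, which is a strict majority of $H_p$ by hypothesis. (We take $m_1\neq\lambda$, the interesting case: since well-behaved processors never send $\lambda$, if $m_1=\lambda$ then $Q_1\subseteq F_r$, and as $W_r\subseteq H_p$ and $|F_r|<|W_r|$ this makes the hypothesis vacuous.)

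First I would record three facts. (1)~A well-behaved processor broadcasts the same message to everyone, so every well-behaved member of $Q_1$ also delivers $m_1$ to $p'$; hence $Q_1\cap W_r\subseteq A$ and $|A|\ge|Q_1\cap W_r|$. (2)~A processor in $B$ cannot be well-behaved, since it would then broadcast $m_2$, against the hypothesis that no processor broadcasts $m_2$; hence $B\subseteq F_r$. (3)~$B$ is disjoint from $Q_1$: if $q\in Q_1$ then the unique message attached to $q$ in round $r$ is $m_1$ — either $q$ is well-behaved and broadcasts $m_1$, or $q$ is impersonated and $m_1$ is the single message the adversary picks for $q$ (as it delivers $m_1\neq\lambda$ to $p$) — so $q$ can only deliver $m_1$ or $\lambda$ to $p'$, never $m_2$. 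From (2), (3), and the minority assumption $|F_r|<|W_r|$, the sets $B$ and $Q_1\cap F_r$ are disjoint subsets of $F_r$, giving the first bound $|B|+|Q_1\cap F_r|\le|F_r|<|W_r|$, i.e.\ $|B|<|W_r|-|Q_1\cap F_r|$.

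Next I would get a second bound from the no-equivocation structure together with~\Cref{l1}. The three sets $W_r$, $Q_1\cap F_r$, and $B$ are pairwise disjoint (the first is well-behaved, the other two impersonated and disjoint by (3)) and all lie in $H_p\cap H_{p'}$: indeed $W_r\subseteq H_p\cap H_{p'}$ because well-behaved processors are heard by everyone, $Q_1\subseteq H_p\cap H_{p'}$ by~\Cref{l1}.1, and every $q\in B$ is heard by $p'$ by definition of $B$ and by $p$ because $q$ is impersonated with single message $m_2$ delivered to $p'$, so the no-equivocation rule forces the adversary to deliver $m_2$ or $\lambda$ to $p$ as well. Hence $|H_p\cap H_{p'}|\ge|W_r|+|Q_1\cap F_r|+|B|$. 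By~\Cref{l1}.2, $Q_1$ is a strict majority of $H_p\cap H_{p'}$, so $2|Q_1|>|H_p\cap H_{p'}|\ge|W_r|+|Q_1\cap F_r|+|B|$; writing $|Q_1|=|Q_1\cap W_r|+|Q_1\cap F_r|$ and rearranging gives the second bound $|B|<2\,|Q_1\cap W_r|+|Q_1\cap F_r|-|W_r|$.

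Adding the two bounds yields $2|B|<2\,|Q_1\cap W_r|$, hence $|B|<|Q_1\cap W_r|\le|A|$, as required. The step I expect to be the crux is recognizing that the naive count is insufficient: from $Q_1\subseteq H_p$ and $B\cap Q_1=\emptyset$ one only gets $|B|<|Q_1|$, whereas $A$ can be as small as $Q_1\cap W_r$, a proper subset of $Q_1$. The resolution is that the impersonated members of $Q_1$, though useless to $p'$ for witnessing $m_1$, still consume part of the adversary's global corruption budget $|F_r|<|W_r|$, leaving too little room for $B$ to overtake $Q_1\cap W_r$; making this precise requires counting over $H_p\cap H_{p'}$ via~\Cref{l1}.2 rather than over $H_p$, after which the two bounds combine cleanly.
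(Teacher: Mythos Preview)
Your proof is correct and rests on exactly the same structural facts as the paper's: $Q_1\cap W_r\subseteq A$, $B\subseteq F_r$, $B\cap Q_1=\emptyset$, and $B\subseteq H_p\cap H_{p'}$ via the no-equivocation rule, together with \Cref{l1}. Where you diverge is only in the final counting. You derive two separate upper bounds on $|B|$ and average them; the paper does it in one stroke. Writing $H=H_p\cap H_{p'}$ and $F=F_r\cap H$, the paper observes that $Q_1$ is a strict majority of $H$ (by \Cref{l1}) while $F$ is a strict minority of $H$ (since $W_r\subseteq H$ and $|F|\le|F_r|<|W_r|$), so $|Q_1|>|F|$; subtracting $|Q_1\cap F|$ from both sides gives $|Q_1\setminus F|>|F\setminus Q_1|$, and since $Q_1\setminus F=Q_1\cap W_r\subseteq A$ and $B\subseteq F\setminus Q_1$, the conclusion $|A|>|B|$ follows immediately. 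So the ``crux'' you identify---that a naive count is insufficient and one must carefully budget the impersonated members of $Q_1$---is real, but resolving it does not require two inequalities: the single observation ``majority beats minority inside $H$'' already does the job, and the set-difference trick $|X|>|Y|\Rightarrow|X\setminus Y|>|Y\setminus X|$ finishes it cleanly.
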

\begin{proof}

    First, note that $m_1\neq \lambda$ because at least one well-behaved processor sends $m_1$.

    Let $H=H_p\cap H_p'$ be the set of processors that both $p$ and $p'$ hear from.
    Let $F=F_r\cap H$ be the set of impersonated processors that both $p$ and $p'$ hear from.
    Note that, by definition of the model, $W_r\subseteq H$; thus $2|F|< |H|$ and $F\cap H$ consists of a strict minority of $H$.

    Let $M_1$ be the set of processors from which $p$ receives $m_1$.
    By~\Cref{l1}, we have $M_1\subseteq H$ and $M_1$ consists of a strict majority of $H$.
    With the fact that $F$ is a strict minority of $H$, we have $|M_1| > |F|$.
    Therefore, by simple properties of finite sets, we have 
    \begin{equation}
        \label{eq:sets-card}
        |M_1\setminus F| > |F\setminus M_1|
    \end{equation}

    Let $M_1'$ be the set of processors from which $p'$ receives $m_1$.
    Let $M_2'$ be the set of processors from which $p'$ receives $m_2$.
    We must show that $|M_2'|< |M_1'|$.

    Since each well-behaved processor broadcasts a unique message, we have $M_1\setminus F\subseteq M_1'$.
    Moreover, since no processor broadcasts $m_2\notin\{m_1,\lambda\}$ and since the adversary cannot equivocate, we have $M_2'\subseteq F\setminus M_1$.
    Using~\Cref{eq:sets-card}, we easily get that $|M_1'| > |M_2'|$ and we are done.
\end{proof}

In supplemental material~\cite{losa_nano-odynamic-participation-supplemental_2023}, we provide a mechanically-checked proof of \Cref{thm3} in Isabelle/HOL.

We are now ready to prove the commit-adopt algorithm correct.

\subsubsection{Proof of correctness of the commit-adopt algorithm}

\begin{theorem}
    \Cref{algo:ca} implements commit-adopt.
    \label{thm:ca-correct}
\end{theorem}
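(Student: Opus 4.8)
The plan is to verify the three properties of \Cref{def:ca} in turn: Termination is immediate, since every online processor outputs at the end of the second no-equivocation round, so $R=2$ works. For Validity, suppose every processor has the same input $v$. Then in the first round every processor hears only $v$ (and possibly $\lambda$) from the processors it hears of — in particular, since $W_r\subseteq H_p$ and the well-behaved processors form a strict majority of the online processors, $v$ is heard from a strict majority. Hence every online processor broadcasts propose-commit$(v)$ in round 2, and by the same majority argument every processor receives propose-commit$(v)$ from a strict majority in round 2 and outputs $commit(v)$ by case~\ref{ca:output-case-1}. The only subtlety is confirming that "all processors have input $v$" forces a strict majority to be heard; this is exactly the standing property that $W_r$ is a strict majority of $O_r$ together with $W_r\subseteq H_p$.

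For Agreement, suppose some processor $p$ outputs $commit(v)$; I must show every processor $p'$ outputs $commit(v)$ or $adopt(v)$. Since $p$ took case~\ref{ca:output-case-1}, it received propose-commit$(v)$ from a strict majority in round 2; hence at least one well-behaved processor broadcast propose-commit$(v)$ in round 2, which means that well-behaved processor heard $v$ from a strict majority in round 1. I will argue two things. First, no processor broadcasts propose-commit$(v')$ for $v'\neq v$ in round 2: such a processor would have heard $v'$ from a strict majority in round 1, contradicting \Cref{thm2} applied to round 1 (with $m_1=v$, $m_2=v'$, using that some well-behaved processor heard $v$ from a strict majority). Consequently, in round 2 every processor sees only messages of the form propose-commit$(v)$ and no-commit, so no processor can output $commit(v')$ for $v'\neq v$ via case~\ref{ca:output-case-1}; if $p'$ outputs a commit it must be $commit(v)$, and we are done for that subcase. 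Second, I must rule out $p'$ outputting $adopt(v')$ with $v'\neq v$. This is where \Cref{thm3} enters, applied to round 1: some well-behaved processor (indeed $p$ if $p$ itself took case~\ref{ca:output-case-1}... but more robustly, the well-behaved processor that proposed-committed) received $v$ from a strict majority in round 1, and no processor broadcasts any value $v'\neq v$ other than... wait — here is the gap I need to handle carefully: \Cref{thm3} requires that \emph{no processor broadcasts} $m_2$ in round 1, but other processors may well have different inputs. So I cannot directly apply \Cref{thm3} to arbitrary $v'$.

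The right fix, and the step I expect to be the main obstacle, is this: I do not need to show $p'$ receives $v$ more often than $v'$ in round 1; I need to show $p'$ does not reach case~\ref{ca:output-case-2} with the \emph{wrong} value. Instead I will re-derive the needed count at the round-2 level, not round-1. Observe that if $p'$ falls through to case~\ref{ca:output-case-2} or~\ref{ca:output-case-3}, it outputs $adopt(v'')$ where $v''$ is among the values it heard in round 1, and in case~\ref{ca:output-case-3} it is its own input. I want $v''=v$. The key is that the well-behaved processor $q$ that broadcast propose-commit$(v)$ in round 2 heard $v$ from a strict majority in round 1; by \Cref{thm3} applied in round 1 with $m_1=v$ — but now I restrict attention to \emph{well-behaved senders}: every well-behaved processor broadcasts a unique message, so $q$'s strict majority $M_1$ contributes $M_1\setminus F_r$ well-behaved senders of $v$, and $|M_1\setminus F_r|>|F_r\cap H_q|\geq$ (the impersonated senders any $p'$ could hear) — so, reusing the cardinality computation from the proof of \Cref{thm3} but applied to round 1 with the observation that well-behaved processors are consistent across all receivers, I get that every $p'$ hears $v$ in round 1 strictly more often than any \emph{single} other value, hence $p'$'s case~\ref{ca:output-case-2} picks $v$, and case~\ref{ca:output-case-3} cannot be reached (since $p'$ heard $v$, so it heard some value). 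Thus in all cases $p'$ outputs $commit(v)$ or $adopt(v)$. I will organize the final write-up as: (1) establish "some well-behaved $q$ heard $v$ from a strict majority in round 1"; (2) apply \Cref{thm2} to kill conflicting propose-commits, settling the commit subcase; (3) apply \Cref{thm3} (or its cardinality core) to round 1 to show every $p'$ hears $v$ strictly more than any other value, settling the adopt subcase.
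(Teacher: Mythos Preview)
Your handling of Termination and Validity is fine, and for Agreement your step (2) --- using \Cref{thm2} in round~1 to show that no (well-behaved) processor broadcasts $\textit{propose-commit}(v')$ for $v'\neq v$ --- is correct and in fact is exactly the ingredient the paper uses. The gap is in step~(3), the adopt subcase.

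First, case~\ref{ca:output-case-2} of \Cref{algo:ca} is about the round-2 messages (the $\textit{propose-commit}(\cdot)$ values), not the round-1 inputs: the output is computed at the end of round~2 from what was received in round~2. Second, and more seriously, your proposed workaround --- reusing the cardinality core of \Cref{thm3} in round~1 to conclude that every $p'$ hears $v$ strictly more often than any single other value --- is simply false. In round~1 well-behaved processors may legitimately broadcast values other than $v$, so the hypothesis ``no processor broadcasts $m_2$'' fails, and the conclusion fails with it. Concretely: take $W_1=\{q,p',p_3\}$ with inputs $v,v',v'$ and $F_1=\{p_4,p_5\}$; let the adversary send $v$ from $p_4,p_5$ to $q$ but $\lambda$ to $p'$. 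Then $q$ hears $v$ from a strict majority ($3$ of $5$), yet $p'$ hears $v'$ twice and $v$ only once. Your bound ``$|M_1\setminus F_r|>|F_r\cap H_q|\geq\ldots$'' only controls impersonated senders of $v'$, not well-behaved ones.

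The fix is to apply \Cref{thm3} to round~2, not round~1. You have already established in step~(2) that no processor broadcasts $\textit{propose-commit}(v')$ for any $v'\neq v$; and $p$ received $\textit{propose-commit}(v)$ from a strict majority in round~2. These are precisely the hypotheses of \Cref{thm3} with $m_1=\textit{propose-commit}(v)$ and $m_2=\textit{propose-commit}(v')$. The theorem then gives directly that every $p'$ receives $\textit{propose-commit}(v)$ strictly more often than $\textit{propose-commit}(v')$, so case~\ref{ca:output-case-2} yields $adopt(v)$ (and case~\ref{ca:output-case-3} is never reached, since $p'$ receives $\textit{propose-commit}(v)$ at least once). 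This is exactly the paper's argument; your step~(2) was the right preparation, you just pointed \Cref{thm3} at the wrong round.
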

\begin{proof}
    First, we show the safety property of commit-adopt.
    Assume by contradiction that a processor $p$ outputs $commit(v)$ and a processor $p'$ outputs $commit(v')$ or $adopt(v')$ and $v\neq v'$.

    Suppose that $p'$ outputs according to~\Cref{ca:output-case-1} in~\Cref{algo:ca}.
    Then $p'$ receives $\textit{propose-commit}(v')$ from a strict majority.
    Moreover, since $p$ outputs $commit(v)$, $p$ receives \textit{propose-commit}$(v)$ from a strict majority.
    With $v\neq v'$, this contradicts~\Cref{thm2}.

    Now suppose that $p'$ does not output according to~\Cref{ca:output-case-1} in~\Cref{algo:ca}.
    Note that, if a processor broadcasts $\textit{propose-commit}(v)$, then, by~\Cref{thm2}, no processor broadcasts $\textit{propose-commit}(w)$ for $w\neq v$.
    Moreover, since $p$ outputs $commit(v)$, $p$ receives $\textit{propose-commit}(v)$ from a strict majority in round 2.
    Therefore, we satisfy the premises of~\Cref{thm3} with $m_1=\textit{propose-commit}(v)$ and $m_2=\textit{propose-commit}(v')$.
    Thus, by~\Cref{thm3}, $p'$ receives $\textit{propose-commit}(v')$ strictly less often than $\textit{propose-commit}(v)$.
    Thus $p'$ outputs $adopt(v)$ according to~\Cref{ca:output-case-2} in~\Cref{algo:ca}.
    With $v\neq v'$, this contradicts the assumption that $p'$ outputs $commit(v')$ or $adopt(v')$.

    Next, we show the validity property of commit-adopt.
    Suppose that all processors have input~$v$.
    Then, every processor receives $v$ from a strict majority in the first round.
    Thus, every processor broadcasts $\textit{propose-commit}(v)$ in round 2.
    Similarly, every processor receives $\textit{propose-commit}(v)$ from a strict majority in the second round.
    Thus, every processor outputs $commit(v)$ and we are done.
\end{proof}

Now that have presented the commit-adopt algorithm, it remains to present the conciliators that we use in the two consensus algorithms, and how they achieve the desired liveness properties.

\subsection{Consensus with probabilistic termination}
\label{sec:proba-consensus}

In this Section, we present a conciliator algorithm which, paired with the commit-adopt algorithm of~\Cref{sec:ca} in the generic consensus algorithm of~\Cref{sec:alternation}, allows solving consensus with deterministic safety and probabilistic termination in 20 rounds in expectation.

In addition to the properties of~\Cref{def:conciliator}, the conciliator algorithm we present in this section satisfies the following probabilistic agreement property:
\begin{definition}[Probabilistic agreement]
    \label{def:proba-agreement}
    With probability 1/2, every processor outputs the same value.
\end{definition}

The idea of the probabilistic conciliator is to rely on a good, unanimously agreed-upon leader (given with probability 1/2 by the leader-election oracle) to bring all processors into agreement.
However, blindly trusting the leader would put the validity property of conciliators at risk when we are unlucky and the leader is impersonated by the adversary.
Instead, we proceed as follows.
\begin{algorithm}[Probabilistic Conciliator]
    \label{algo:proba-conciliator}
    The algorithm takes 3 rounds of the no-equivocation model.
    In the first two rounds, participants do commit-adopt according to~\Cref{algo:ca}.
    Then, in the third round, each participant $p$ broadcasts its commit-adopt output and then outputs as follows:
    \begin{enumerate}
        \item If $p$ received $commit(v)$ from a strict majority, then $p$ outputs $v$.\label{algo:proba-output-1}
        \item Else, if $p$ received $commit(v)$ or $adopt(v)$ from its leader $l_p^3$, then $p$ outputs $v$.\label{algo:proba-output-2}
        \item Else $p$ outputs its initial, external input.\label{algo:proba-output-3}
    \end{enumerate}
\end{algorithm}

\Cref{fig:proba-cons} depicts the generic consensus algorithm instantiated with~\Cref{algo:proba-conciliator} as conciliator.
\begin{figure}[ht]
        \caption{Three processors $p_1$, $p_2$, and $p_3$ executing an infinite alternating sequence of conciliator and commit-adopt phases as per~\Cref{algo:gen-consensus}. Each conciliator phase consists of three rounds as per~\Cref{algo:proba-conciliator}: the first 2 to do commit-adopt, and a third round in which processors output their leader's value if possible. Each commit-adopt consists of 2 rounds as per~\Cref{algo:ca}.}
        \begin{center}
            \includegraphics[width=\textwidth]{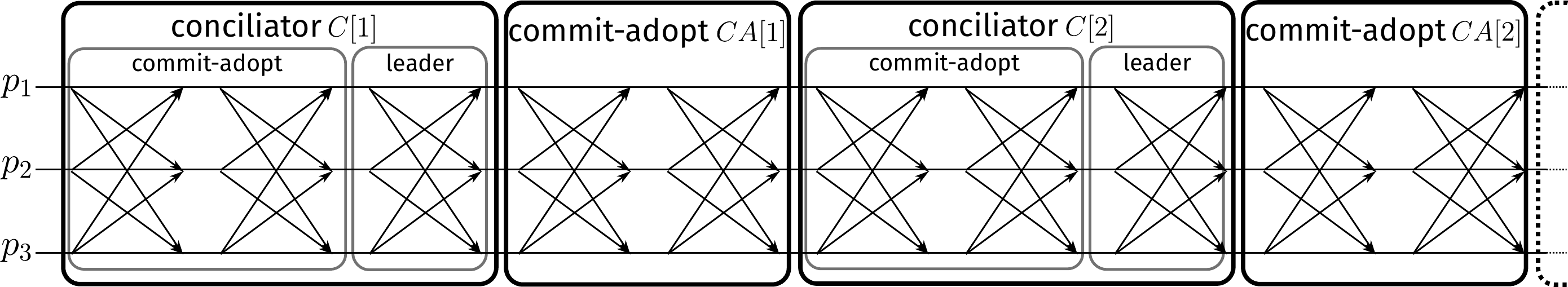}
        \end{center}
        \label{fig:proba-cons}
\end{figure}

\begin{theorem}
    \label{thm:proba-conciliator-correct}
    \Cref{algo:proba-conciliator} implements the conciliator specification with the probabilistic agreement guarantee.
\end{theorem}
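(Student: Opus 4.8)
The plan is to check the three requirements of \Cref{def:conciliator} plus the probabilistic agreement property of \Cref{def:proba-agreement} in turn, with the last one carrying all the work. Termination is immediate: the embedded commit-adopt terminates by \Cref{thm:ca-correct}, and round~3 always produces an output, so every processor outputs in round~3. For validity, suppose every processor has the same input $v$; then by the validity of commit-adopt every processor outputs $\commit(v)$, so every well-behaved processor broadcasts $\commit(v)$ in round~3. Using the elementary fact already exploited in the proof of \Cref{thm:ca-correct} — namely $W_3\subseteq H_p$ and $|F_3|<|W_3|$, so $W_3$ is a strict majority of $H_p$ for every $p$ — every processor receives $\commit(v)$ from a strict majority and outputs $v$ via case~\ref{algo:proba-output-1}.

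For probabilistic agreement I would condition on the probability-$1/2$ event $G$ that the oracle returns to every processor the same leader $l_3$, which is well-behaved (hence online, with messages delivered faithfully) in round~3, and show that on $G$ all processors output the same value. I would first record three observations. (i) By agreement of commit-adopt, the embedded commit-adopt has at most one ``committable'' value: if any processor outputs $\commit(v)$, then every processor outputs $\commit(v)$ or $\adopt(v)$ (so the value of any $\commit$-output is unique). (ii) On $G$, every processor hears of $l_3$ and receives $l_3$'s genuine round-3 message, so case~\ref{algo:proba-output-3} never applies. (iii) If a processor receives $\commit(w)$ from a strict majority in round~3, then, by the strict-majority property of the no-equivocation model, some well-behaved round-3 processor broadcast $\commit(w)$ and hence output $\commit(w)$ from commit-adopt.

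I would then split into two cases. If some processor outputs $\commit(v)$ from the embedded commit-adopt, then by (i) every processor's commit-adopt output is $\commit(v)$ or $\adopt(v)$, so every well-behaved round-3 processor broadcasts $\commit(v)$ or $\adopt(v)$; by (iii) case~\ref{algo:proba-output-1} can only yield output $v$, and case~\ref{algo:proba-output-2} uses $l_3$'s message, which is $\commit(v)$ or $\adopt(v)$, so also yields $v$ — hence all processors output $v$. If no processor outputs $\commit(\cdot)$ from commit-adopt, then by (iii) no processor receives $\commit(\cdot)$ from a strict majority in round~3, so case~\ref{algo:proba-output-1} never fires; with (ii) this forces every processor into case~\ref{algo:proba-output-2}, where it outputs the single value carried by $l_3$'s round-3 message, the same for all. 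In both cases all processors agree, so agreement holds on $G$, which has probability~$1/2$.

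The one point that needs care — and the closest thing to an obstacle — is pinning down exactly what the leader-election oracle guarantees, in particular that the common leader $l_3$ returned on the good event is well-behaved in round~3 and therefore online with faithfully delivered messages; this is precisely what rules out case~\ref{algo:proba-output-3} on $G$. Everything else is a routine combination of commit-adopt agreement with the basic delivery and strict-majority properties of the no-equivocation model.
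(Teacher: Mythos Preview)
Your proof is correct and uses the same ingredients as the paper: commit-adopt validity for the validity clause, and for probabilistic agreement the good-leader event to rule out case~\ref{algo:proba-output-3}, the strict-majority-implies-well-behaved-sender fact, and commit-adopt agreement. The only difference is the case split for probabilistic agreement: the paper fixes two processors $p_1,p_2$ and branches on which output rule each one used (both case~\ref{algo:proba-output-1}, invoking \Cref{thm2}; or one case~\ref{algo:proba-output-1} and one case~\ref{algo:proba-output-2}), whereas you branch globally on whether \emph{any} processor obtained $\commit(\cdot)$ from the embedded commit-adopt. Your decomposition is arguably tidier, since it avoids the explicit appeal to \Cref{thm2} by routing that case through commit-adopt agreement via your observation~(iii), and it handles the ``both in case~\ref{algo:proba-output-2}'' situation (which the paper leaves implicit) uniformly.
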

\begin{proof}
    We must show the validity and termination properties of~\Cref{def:conciliator}, and the probabilistic agreement property of~\Cref{def:proba-agreement}.

    Termination is trivial: the algorithm terminates in 3 rounds.

    For Validity, if all start with $v$ then, by property of commit-adopt, all output $commit(v)$ from the commit-adopt at the end of round 2 and thus broadcast $commit(v)$, then all receive a strict majority of $commit(v)$ by the end of round 3, and thus all output $v$ according to~\Cref{algo:proba-output-1}.

    For the liveness property, it suffices to show that if all processors obtain the same well-behaved leader $l\in W_2$ in round 2, then all output the same value.
    So, suppose that all processors obtain the same well-behaved leader $l$ in round 2.
    First note that, since $l$ is a well-behaved processor, every processor receives $l$'s message and no processor outputs according to~\Cref{algo:proba-output-3}.
    Thus, it suffices to consider the following two cases:
    \begin{enumerate}[label=(\alph*)]
        \item $p_1$ and $p_2$ output according to~\Cref{algo:proba-output-1}.
        \item $p_1$ outputs according to~\Cref{algo:proba-output-1} and $p_2$ outputs according to~\Cref{algo:proba-output-2}.
    \end{enumerate}

    First, assume that both $p_1$ and $p_2$ output according to~\Cref{algo:proba-output-1} and assume by contradiction that $p_1$ outputs $v_1$ and $p_2$ outputs $v_2\neq v_1$.
    Then, $p_1$ received $commit(v_1)$ from a strict majority in round 3 and $p_2$ received $commit(v_2)$ from a strict majority in round 3.
    Since no processor broadcasts both $commit(v_1)$ and $commit(v2)$, with~\Cref{thm2}, we get a contradiction.

    Second, assume that $p_1$ outputs according to~\Cref{algo:proba-output-1} and $p_2$ outputs according to~\Cref{algo:proba-output-2}.
    Then $p_1$ receives $commit(v_1)$ from a strict majority in round 3.
    Thus, at least one well-behaved processor $p\in W_3$ broadcast $commit(v_1)$ in round 3, and thus all processors output either $commit(v_1)$ or $adopt(v_1)$ from the commit-adopt at the end of round 2.
    Thus, by the agreement property of commit-adopt~(\Cref{def:ca}), $p_2$ receives either $commit(v_1)$ or $adopt(v_1)$ from its leader and outputs~$v_1$.
\end{proof}

We are now ready to show that the we solve consensus in 20 rounds in expectation if we use the conciliator of~\Cref{algo:proba-conciliator}.
\begin{theorem}
    \label{thm:expected-rounds}
    In the generic consensus algorithm (\Cref{algo:gen-consensus}) instantiated with~\Cref{algo:ca} as commit-adopt and the probabilistic conciliator (\Cref{algo:proba-conciliator}) as conciliator, every processor produces a consensus output after 20 rounds of the IIAB model in expectation.
\end{theorem}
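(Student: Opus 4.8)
The plan is to combine three ingredients: the IIAB-round cost of one phase, the fact that a single successful conciliator phase forces every processor to decide within that same phase, and a geometric-series bound on the index of the first successful phase.

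First I would do the round accounting. By \Cref{algo:proba-conciliator} a conciliator phase takes $3$ no-equivocation rounds, and by \Cref{algo:ca} a commit-adopt phase takes $2$ no-equivocation rounds, so one iteration of the generic algorithm (\Cref{algo:gen-consensus}), consisting of $C[n]$ followed by $CA[n]$, uses $5$ no-equivocation rounds. Each no-equivocation round is simulated by $2$ IIAB rounds via \Cref{algo:non-eq}, which is correct by \Cref{thm:non-eq-correct}. Hence phase $n$ occupies IIAB rounds $10(n-1)+1,\dots,10n$ and ends at IIAB round $10n$.

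Second I would show that if conciliator $C[k]$ achieves its probabilistic-agreement guarantee (\Cref{def:proba-agreement}), i.e.\ every processor outputs the same value $v$ from $C[k]$, then every processor produces a consensus output by the end of phase $k$, hence by IIAB round $10k$. Indeed, in that case every processor feeds the same input $v$ to $CA[k]$, so by the validity of commit-adopt (\Cref{thm:ca-correct}) every processor outputs $\commit(v)$, and thus, following \Cref{algo:gen-consensus}, every processor decides $v$. Processors may also decide earlier, through a "lucky" commit-adopt; that only decreases the decision round, so it does not affect the upper bound.

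Third I would bound the index $T$ of the first phase whose conciliator achieves probabilistic agreement. Since each conciliator instance $C[n]$ draws a fresh leader from the leader-election oracle, whose good-leader event has probability $1/2$ independently of the history of the execution, we get $\Pr[T>k]\le 2^{-k}$ for every $k\ge 0$, hence $E[T]=\sum_{k\ge 0}\Pr[T>k]\le\sum_{k\ge 0}2^{-k}=2$. Combining with the first two steps, the IIAB round by which every processor has produced a consensus output is at most $10T$, so its expectation is at most $10\,E[T]\le 20$, which is the claim. The main obstacle is the independence claim in this last step: the model only states that the oracle succeeds "with probability $1/2$" each round, so one must either take as given that these good-leader events are independent across rounds (equivalently, that conditioned on the entire past the next conciliator still succeeds with probability at least $1/2$) or strengthen the oracle specification accordingly; the rest is routine bookkeeping.
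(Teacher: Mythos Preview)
Your proposal is correct and takes essentially the same approach as the paper: count $3+2=5$ no-equivocation rounds (i.e., $10$ IIAB rounds) per conciliator/commit-adopt pair, observe that a successful conciliator forces a decision in the following commit-adopt, and use the $1/2$ success probability to get an expected $2$ pairs, hence $20$ IIAB rounds. Your version is in fact more careful than the paper's, which simply asserts ``in expectation, we reach a decision in $CA[2]$'' without spelling out the geometric bound or the independence assumption you rightly flag.
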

\begin{proof}
    By~\Cref{thm:proba-conciliator-correct} and~\Cref{def:proba-agreement}, each conciliator produces agreement with probability 1/2, and, if agreement is produced then the following commit-adopt is guaranteed to produce a consensus decision.
    Thus, in expectation, we reach a decision in $CA[2]$.
    This takes 2 conciliators (taking 3 no-equivocation rounds each) and 2 commit-adopts (taking 2 no-equivocation rounds each), for a total of 10 no-equivocation rounds or, equivalently, 20 IIAB rounds.
\end{proof}

\subsection{Consensus with deterministic safety and deterministic liveness}
\label{sec:dolev-strong}

In this section, we turn to solving consensus fully deterministically under an eventual stabilization assumption.

First, we present a deterministic conciliator algorithm.
This algorithm is parameterized by a number of rounds $N>1$ and guarantees agreement after $N+1$ rounds assuming the following deterministic-conciliation assumption:

\begin{definition}[Deterministic-conciliation assumption]
    \label{def:det-conciliation-assm}
    The deterministic-conciliation assumption holds when:
    \begin{enumerate}
        \item Participation is constant over the first $N$ rounds, i.e. $O_r=O_1$ for every round $r\leq N$, and
        \item The adversary is growing, i.e. for every round $r$, $F_r\subseteq F_{r+1}$, and
        \item Less than $N$ participants are impersonated in the first $N$ rounds, i.e.\ $\left|\bigcup_{r\leq N}F_r\right|<N$.
    \end{enumerate}
\end{definition}

\begin{definition}[Deterministic agreement property]
    \label{def:det-agreement}
    Assuming that the deterministic-conciliation assumption holds, every processor outputs the same value at the end of round $N+1$.
\end{definition}

Next, we obtain our fully-deterministic consensus algorithm by using the generic consensus algorithm (\Cref{algo:gen-consensus}) where we instantiate each conciliator $C[n]$ with the deterministic conciliator using parameter $N=2^n$ and, as before, we instantiate each commit-adopt $CA[n]$ with the algorithm of~\Cref{sec:ca}.

The fully-deterministic consensus algorithm guarantees deterministic safety and deterministic liveness under the following eventual stabilization assumption:
\begin{definition}[Eventual stabilization]
    An execution satisfies the eventual stabilization assumption when:
    \begin{enumerate}
            \item The adversary is growing, i.e. for every round $r$, $F_r\subseteq F_{r+1}$, and
            \item There is a round $R$, unknown to the processors and called the stabilization round, such that participation remains constant after round $R$ ($O_r=O_R$ for every round $r\geq R$).
    \end{enumerate}
\end{definition}

The idea is that, because we increase the number of rounds of the deterministic conciliator each time we run it, eventually we run an instance of the deterministic conciliator that starts after round $R$ and that runs for a number of rounds greater than there are failures; thus, by property of the deterministic conciliator, all processors output the same value and a decision is reached in the next commit-adopt.

Finally, by increasing the number of conciliation rounds exponentially, we ensure that the fully-deterministic consensus algorithm terminates in $O(\left|\bigcup_{r\geq R}F_r\right|)$ rounds, or equivalently $O(\left|O_R\right|)$ rounds, after round $R$.

\subsubsection{The deterministic conciliator}

The deterministic conciliator outputs after $N+1$ rounds, where $N$ is a parameter of the algorithm, and it must guarantee the validity property of conciliators and, assuming the deterministic-conciliation assumption above, it must guarantee agreement, i.e.\ that every processor output the same value.

To ensure agreement under the deterministic-conciliation assumption above, we simply use the Dolev-Strong algorithm~\cite{dolev_authenticated_1983} to reach agreement on the set of processors that participated in round 1 and on their inputs, and then deterministically select an output among the inputs.
Under the deterministic-conciliation assumption, the system essentially behaves like a traditional, fixed-participation synchronous system with message authentication and with less than $N$ malicious processors.
Thus the Dolev-Strong algorithm guarantees agreemnent after $N$ rounds.

There is only one small catch: we must guarantee that all processors output the same value, even the impersonated processors.
We easily achieve this by adding an extra round in which processors broadcast their candidate output and output the value received from a strict majority, if any, and otherwise output their own candidate output.

Finally, we must also guarantee the validity property even when the deterministic-conciliation assumption does not hold.
Fortunately, this is easily achieved by tweaking the deterministic rule used to select an output: we require each processor to output the value held by a strict majority of the processors if there is such a value.

We now give the full algorithm, noting that it is just the Dolev-Strong algorithm with a specific rule to pick an output in the last round and an additional round at the end.
\begin{algorithm}[Deterministic conciliator]
    \label{algo:dolev-strong}
    The algorithm is parameterize by a natural number $N>0$ and runs for $N$ rounds.
    Each processor $p$ maintains a local variable $e_p$, initialized to the empty set.
    \begin{itemize}
        \item In the send phase of round $1$, each online processor $p$ broadcasts the signed message $\tuple{p,1,\text{in}_p}$ where $\text{in}_p$ is $p$'s external input.
        \item In the receive phase of each round $1\leq r\leq N$, each online processor $p$ does the following: for each message of the form $\tuple{p_{r},r,\tuple{p_{r-1},r-1,\dots\tuple{p_1,1,v}}}$ that $p$ receives, if no processor repeats in the sequence and if $p_1$ does not appear in any tuple in the set $e_p$, then $p$ adds the tuple $(p_1,v)$ to the set~$e_p$.
        \item In the send phase of each round $1<r\leq N$, each online processor $p$ first initializes a local variable $m_p^r$ to the empty set.
            Then, for each message $m$ of the form $\tuple{p_{r-1},r-1,\tuple{p_{r-2},r-2,\dots\tuple{p_1,1,v}}}$ that $p$ received in the previous round $r-1$, if no processor repeats in the sequence then processor $p$ adds the signed message $\tuple{p,r,m}$ to $m_p^r$.
            Finally, processor $p$ broadcasts $m_p^r$.
        \item In the send phase of round $N+1$, each online processor $p$ determines a candidate output as follows and broadcasts it.
            \begin{itemize}
                \item If there is a value $v$ such that a strict majority of the processors appearing in a tuple in $e_p$ appear in a tuple whose second component is $v$, then $p$'s candidate output is $v$.
                \item Otherwise, $p$'s candidate output is the smallest value appearing in a tuple in $e_p$.
            \end{itemize}
        \item Finally, at the end of round $N+1$, each processor outputs as follows:
            \begin{itemize}
                \item If $p$ receives a value $v$ from a strict majority of the processors it hears of, then $p$ outputs $v$.
                \item Otherwise, $p$ outputs its own candidate output.
            \end{itemize}
    \end{itemize}
\end{algorithm}

\begin{lemma}
    \label{lem:dolev-strong}
    If the deterministic-conciliation assumption holds, then every processor $p$ that has not been impersonated has the same set $e_p$ at the end of round $N$.
\end{lemma}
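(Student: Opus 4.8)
The plan is to run, in parallel for every round-$1$ participant, the classical correctness argument for Dolev-Strong's authenticated broadcast. First I would unpack \Cref{def:det-conciliation-assm} into a clean picture: since participation is constant, $O_r=O_1$ for every $r\le N$; since the adversary is growing, $\bigcup_{r\le N}F_r=F_N$ with $|F_N|<N$; and a processor ``has not been impersonated'' precisely when it lies in $W_N:=O_1\setminus F_N$, i.e.\ it is online and well-behaved throughout rounds $1,\dots,N$. The goal is then that $e_p=e_{p'}$ at the end of round $N$ for all $p,p'\in W_N$. Because the algorithm never inserts two tuples with the same first component into $e_p$, it suffices to (i) pin down, for each $q\in O_1$, the unique value that any processor of $W_N$ can ever associate with $q$, and then (ii) show that which of these tuples appear is the same across all of $W_N$.

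For (i): if $q\in W_N$ then $q$ is well-behaved in round $1$ and broadcasts $\langle q,1,\mathrm{in}_q\rangle$; by the round-indexed message-authentication restrictions, and because $q$ is never impersonated in rounds $1,\dots,N$, no signature chain whose innermost component is $\langle q,1,v\rangle$ with $v\neq\mathrm{in}_q$ can ever be formed, so $(q,\mathrm{in}_q)$ is the only tuple with first component $q$ that any processor can add. If $q\in F_N$, I would argue similarly that the innermost component of any chain a processor processes is a round-$1$ signed message $\langle q,1,v\rangle$ which, tracing the nesting back through the authentication rules, must already have appeared in round $1$, and that the no-equivocation restriction bounds what can be attributed to $q$ in round $1$ to a single value; either way, call this value $v_q$. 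Consequently each $e_p$ with $p\in W_N$ is determined by the set $S_p:=\{q : (q,v_q)\in e_p\}$, and it remains to show $S_p=S_{p'}$.

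For (ii), fix $q$, assume some $p\in W_N$ has $(q,v_q)\in e_p$, and let $r^\star$ be the least round in which any processor of $W_N$ adds a tuple for $q$, witnessed by $p^\star$. If $r^\star\le N-1$, then $p^\star$ received a valid $r^\star$-chain rooted at $\langle q,1,v_q\rangle$ in round $r^\star$ and does not itself occur in it (otherwise it would have added $q$'s tuple earlier, contradicting minimality), so in round $r^\star+1\le N$ it broadcasts this chain with its own signature appended; every $p'\in W_N$ then receives a valid $(r^\star+1)$-chain rooted at $\langle q,1,v_q\rangle$ and either already has $q$ in $e_{p'}$ (whence $(q,v_q)\in e_{p'}$ by (i)) or adds $(q,v_q)$ then. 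If instead $r^\star=N$, then $p^\star$ received a valid $N$-chain with $N$ distinct signers, and since $|F_N|<N$ some signer is in $W_N$, at a position $i$: if $i=1$ that signer is $q$, so $q\in W_N$ and every processor of $W_N$ already adds $(q,v_q)$ in round $1$, contradicting $r^\star=N>1$; if $i\ge 2$ that signer received the $(i-1)$-subchain rooted at $\langle q,1,v_q\rangle$ in round $i-1\le N-1$ and thus added a tuple for $q$ by then, again contradicting minimality. So $r^\star\le N-1$ always, and by symmetry $S_p=S_{p'}$. (For $N=1$ the statement is immediate, since then $F_N=\emptyset$.)

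I expect the main obstacle to be step (i): because the algorithm keeps only the \emph{first} value it sees for a given sender, the whole argument collapses if a processor of $W_N$ could ever be led to associate two different values with the same sender, so this is exactly the place where the no-equivocation restriction on the model must be invoked carefully — the difficulty is absent from vanilla Dolev-Strong, where a receiver simply keeps every value it accepts for the sender. Everything else is the standard Dolev-Strong bookkeeping: the ``first correct processor to accept'' case split, the signer count in round $N$, and the forwarding steps.
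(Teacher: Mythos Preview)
Your step (ii) is the paper's argument with a slightly different case structure: the paper handles separately ``a never-impersonated processor adds the tuple in round $N$'' (whence among the $N$ distinct signers one is never impersonated and has already broadcast the subchain) and ``in a round $r<N$'' (whence that processor broadcasts the extension in round $r{+}1$); you instead take $r^\star$ minimal over $W_N$ and derive a contradiction from $r^\star=N$, which collapses to the same two observations. The real difference is your step (i). The paper's proof never confronts the ``only if $p_1$ does not already appear in $e_p$'' clause of the algorithm; it simply writes ``every processor $p''$ adds $(p',v)$ to $e_{p''}$'' as though there were no first-value-only filter, implicitly assuming a unique value per sender without argument. You correctly isolate this as the delicate point (and as the point that distinguishes this setting from vanilla Dolev-Strong) and point to the no-equivocation layer as the place to discharge it. So your decomposition buys an explicit justification for something the paper takes for granted; otherwise the proofs coincide.
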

\begin{proof}
    First, note that if a processor $p$ adds a tuple $\tuple{p',v}$ to its set $e_p$ in round $N$, then it must have received a signed message of the form $\tuple{p_N,N,\tuple{p_{N-1},N-1,\tuple{p_{N-2},N-2,\dots\tuple{p_1,1,v}}}}$, where no processor repeats in the sequence, in round $N$.
    Since $N>\left|\bigcup_{r\leq N}F_r\right|$, the sequence $p_1,p_2,\dots,p_N$ must contain a processor $p_j$ that is well-behaved in all rounds.
    Therefore, in round $j$, $p_j$ broadcasts $\tuple{p_j,j,\tuple{p_{j-1},j-1,\dots\tuple{p',1,v}}}$ in round $j$.
    Thus, every processor $p''$ adds the tuple $\tuple{p',v}$ to its set $e_{p''}$ by the end of round $j$ at the latest.

    Next, note that if a well-behaved processor $p$ adds a tuple $\tuple{p',v}$ to its set $e_p$ in a round $r<N$, then it must have received a message of the form $\tuple{p_r,r,\tuple{p_{r-1},r-1,\dots\tuple{p',1,v}}}$, where no processor repeats in the sequence, in round $r$.
    Therefore, $p$ broadcasts $\tuple{p,r+1,\tuple{p_r,r,\dots\tuple{p',1,v}}}$ in round $r+1$.
    Moreover, $p\notin \{p_r,p_{r-1},\dots,p'\}$ or $\tuple{p',v}$ would already be in $e_p$.
    Thus, every processor $p''$ adds $\tuple{p',v}$ to $e_{p''}$ by the end of round $r+1$ at the latest.

    We conclude that, at the end of round $N$, every processor $p$ that has not been impersonated has the same set $e_p$.
\end{proof}

We are now ready to show that the conciliator algorithm is correct.
\begin{theorem}
    \label{thm:dolev-strong}
    \Cref{algo:dolev-strong} implements a conciliator with the additional deterministic agreement property.
\end{theorem}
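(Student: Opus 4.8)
The plan is to verify the three obligations of \Cref{thm:dolev-strong} separately: termination and validity from \Cref{def:conciliator}, and the deterministic agreement property of \Cref{def:det-agreement}. Termination is immediate, since \Cref{algo:dolev-strong} has every processor output at the end of round $N+1$. The guiding picture for the other two is that \Cref{algo:dolev-strong} is the Dolev-Strong protocol run over the round-$1$ inputs: by the end of round $N$ the relevant processors share a common ``view'' set $e_p$, from which the deterministic rule extracts a common candidate output, and round $N+1$ then forces every processor---including the impersonated ones, whose outgoing links the adversary controls but whose incoming links are untouched---to adopt that value.

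For validity, suppose every processor starts with the same input $v$. The first step is to pin down the contents of $e_p$ at the end of round $N$. Each $q\in W_1$ broadcasts $\tuple{q,1,v}$ in round $1$, which every processor receives, so $(q,v)\in e_p$ for every $p$. Conversely, $(q,w)$ enters some $e_p$ only upon receipt of a signed chain whose innermost signed message is $\tuple{q,1,w}$; unrolling the message-authentication restrictions of the IIAB model by induction on the first round in which $\tuple{q,1,w}$ appears on a link, such a signed message can exist only if $w=v$ (it is $q$'s genuine round-$1$ message) or $q\in F_1$ (the adversary forged it while impersonating $q$ \emph{in round~$1$}; a processor impersonated only at a later round cannot retroactively sign for round~$1$). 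Hence the processors named in $e_p$ consist of $W_1$ together with a subset of $F_1$, and every one of them named with a value other than $v$ lies in $F_1$; since $\left|F_1\right|<\left|W_1\right|$, a strict majority of the processors named in $e_p$ appear with value $v$, so every processor's candidate output is $v$. In particular every processor in $W_{N+1}$ broadcasts $v$ in round $N+1$; since $\left|F_{N+1}\right|<\left|W_{N+1}\right|$ and every processor hears of all of $W_{N+1}$, each processor receives $v$ from a strict majority and outputs $v$.

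For deterministic agreement, assume the deterministic-conciliation assumption. \Cref{lem:dolev-strong} gives that every processor not impersonated during rounds $1,\dots,N$ holds the same set $e_p$---call it $E$---at the end of round $N$, and hence computes the same candidate output $\hat v$, because the rule selecting the candidate output is a function of $e_p$ alone. Since the adversary is growing, $F_N\subseteq F_{N+1}$, so every processor in $W_{N+1}$ is among those not impersonated during rounds $1,\dots,N$; such a processor therefore has candidate output $\hat v$ and, being well-behaved in round $N+1$, genuinely broadcasts it. Exactly as in the validity argument, for every processor $p$ the set $W_{N+1}$ is heard by $p$ (impersonation corrupts only outgoing links) and forms a strict majority of the processors $p$ hears of, so $p$ receives $\hat v$ from a strict majority and outputs $\hat v$; thus all processors output $\hat v$.

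I expect the main obstacle to be the validity step, specifically the inductive argument ruling out any signed chain that ends in a forged round-$1$ input $\tuple{q,1,w}$ with $q\in W_1$ and $w\neq v$: this is where the full strength of the IIAB message-authentication model is needed, including the feature---unavailable with ordinary public-key signatures and a PKI---that compromising $q$ at some round after round~$1$ does not let the adversary fabricate a round-$1$ message of $q$. Everything else reduces cleanly to \Cref{lem:dolev-strong} and elementary majority counting using $\left|F_r\right|<\left|W_r\right|$.
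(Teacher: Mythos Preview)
Your proof is correct and follows essentially the same route as the paper's: termination is immediate; validity is argued by showing that every $e_p$ names all of $W_1$ with value $v$ and at most $F_1$ with anything else, so the majority rule yields candidate output $v$ and round $N+1$ propagates it; deterministic agreement is obtained from \Cref{lem:dolev-strong} together with the growing-adversary assumption to conclude that all of $W_{N+1}$ broadcast the same candidate $\hat v$, which then reaches everyone as a strict majority. Your write-up is in fact a bit more explicit than the paper's in two places---you spell out the inductive use of the IIAB signed-message restrictions to rule out forged round-$1$ inputs for $q\in W_1$, and you make explicit that impersonation tampers only with outgoing links so even impersonated processors receive the majority broadcast---but these are elaborations of the same argument, not a different one.
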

\begin{proof}
        We must show that~\Cref{algo:dolev-strong} satisfies the validity and termination property of conciliators (\Cref{def:conciliator}) and the deterministic agreement property (\Cref{def:det-agreement}).

        Termination is trivial: the algorithm terminates in $N+1$ rounds by definition.

        For Validity, note that, for every well-behaved processor $p'$, every processor $p$ adds $\tuple{p',in_{p'}}$ to the set $e_p$ at the end of round $1$.
        Moreover, note that for every two processors $p$ and $p'$, if $\tuple{p',v}\in e_p$ for some $v$, then $p'$ was online in round $1$.
        Thus, if all well-behaved processors have the same input $v$, then for every processor $p$, $v$ appears in strictly more than 1/2 of the tuples in $e_p$.
        Thus all well-behaved processors broadcast $v$ in round $N+1$ and all output $v$.

        Finally, for the deterministic agreement property, assume that the deterministic-conciliation assumption holds.
        Then, by~\Cref{lem:dolev-strong}, every processors $p$ that has not been impersonated has the same set $e_p$ at the end of round $N$, and thus all such processors broadcast the same candidate output in round $N+1$.
        Because the adversary is growing and participation is constant, there must be more processors that have not been impersonated than impersonated processors.
        Thus, every processor receives the same value $v$ from a strict majority of the processors it hears of in round $N+1$, and thus every processor outputs $v$.
\end{proof}

\subsubsection{Fully deterministic consensus}

\begin{theorem}
    \label{thm:det-consensus}
    Assume that the eventual stabilization assumption holds.
    Consider the generic consensus algorithm (\Cref{algo:gen-consensus}) instantiated with~\Cref{algo:ca} as commit-adopt and, for every $n$, with $C[n]$ being the deterministic conciliator of this section (\Cref{algo:dolev-strong}) with parameter $N=2^n$.
    The algorithm guarantees that every processor produces a consensus output in at most $O(\left|\bigcup_{r\geq R}F_r\right|)$ rounds after the stabilization round $R$.
\end{theorem}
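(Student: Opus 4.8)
Agreement and validity already hold for the generic consensus algorithm (by the theorem asserting that \Cref{algo:gen-consensus} satisfies them), and they do so independently of which conciliator is plugged in, so what is left to prove is termination together with the round bound. The plan is to exhibit a single conciliator instance whose local run satisfies the deterministic-conciliation assumption (\Cref{def:det-conciliation-assm}), apply the deterministic agreement guarantee of that conciliator (\Cref{thm:dolev-strong}, i.e.\ \Cref{def:det-agreement}), and then let the commit-adopt immediately following it turn the resulting agreement into a consensus decision.

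I would first note that the adversary's eventual reach is finite: since the adversary is growing, $(F_r)_r$ is nondecreasing, and for $r\ge R$ we have $F_r\subseteq O_r=O_R$ with $O_R$ finite, so $(F_r)_{r\ge R}$ stabilizes and $F_\infty:=\bigcup_{r\ge R}F_r$ is finite; picking $r$ with $F_r=F_\infty$ and using $|F_r|<|W_r|$ yields $f:=|F_\infty|<|O_R|-f$, hence $f<|O_R|/2$. Next I would locate the decisive conciliator. In the alternating schedule, $C[n]$ runs for $2^n+1$ rounds (\Cref{algo:dolev-strong} with parameter $N=2^n$) and is followed by a constant-length $CA[n]$, so the global round $s_n$ where $C[n]$ starts is $\Theta(2^n)$ by a geometric sum. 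Let $n^*$ be the least $n$ with $s_n\ge R$ and $2^n>f$; this exists since $s_n,2^n\to\infty$. Because $s_{n^*}\ge R$, each of the first $2^{n^*}$ rounds of $C[n^*]$ is a global round $\ge R$, so the online set in each is $O_R$ (participation is constant over those rounds), the adversary is growing by hypothesis, and the processors impersonated during those rounds all lie in $F_\infty$, of size $f<2^{n^*}$; thus \Cref{def:det-conciliation-assm} is met for $C[n^*]$.

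By \Cref{thm:dolev-strong} (deterministic agreement, \Cref{def:det-agreement}), every processor then outputs one common value $v$ at the end of $C[n^*]$, so every processor feeds $v$ to $CA[n^*]$; by the validity clause of commit-adopt (\Cref{def:ca}), which \Cref{algo:ca} satisfies (\Cref{thm:ca-correct}), every processor outputs $\commit(v)$, and hence by step (c) of \Cref{algo:gen-consensus} every processor produces the consensus output $v$ by the end of $CA[n^*]$ (any processor that decided earlier decided the same $v$, by agreement). For the quantitative claim, the rounds elapsed up to and including $CA[n^*]$ number $\Theta(2^{n^*})$ by the same geometric schedule, and by the minimality of $n^*$ together with $f<|O_R|/2$ this is $O(|\bigcup_{r\ge R}F_r|)$ rounds after round $R$.

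The step I expect to be most delicate is the round bookkeeping: faithfully mapping the ``first $N$ rounds'' of \Cref{def:det-conciliation-assm} onto the global rounds occupied by $C[n^*]$ inside the composed alternating schedule so that they are provably all at or after $R$, and carrying out the geometric estimates on $s_{n^*}$ and on the total elapsed rounds precisely enough to land the $O(|O_R|)$ bound rather than merely a finite one. By contrast the correctness core---inheriting agreement and validity, then chaining ``deterministic-conciliation assumption $\Rightarrow$ common output from $C[n^*]$ $\Rightarrow$ commit in $CA[n^*]$''---is essentially mechanical given \Cref{thm:dolev-strong} and \Cref{thm:ca-correct}.
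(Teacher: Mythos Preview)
Your plan is essentially the paper's own argument: locate a conciliator instance whose rounds lie past $R$ and whose parameter exceeds the total number of impersonated processors, invoke \Cref{thm:dolev-strong} to force a common output, feed it to commit-adopt (validity gives $\commit(v)$), and bound the elapsed time by the geometric sum of phase lengths. Your formulation via a single index $n^*$ satisfying both $s_{n^*}\ge R$ and $2^{n^*}>f$ is in fact cleaner than the paper's case split on whether $b=\lceil\log_2|F_\infty|\rceil+1$ exceeds $R$, and you correctly flag the round-translation bookkeeping as the point needing care.
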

\begin{proof}
    Let $b=\left\lceil\log_2\left(\left|\bigcup_{r\geq R}F_r\right|\right)\right\rceil+1$.
    If $b\geq R$, because $2^b>\left|\bigcup_{r\geq R}F_r\right|$, by~\Cref{thm:dolev-strong}, every processor produces the same output in the conciliator $C[b]$.
    Otherwise, each processor produces the same output in the first conciliator after round $R$.
    In any case, every processor produces a consensus decision at the end of the next commit-adopt.

    Note that commit-adopt takes a fixed 3 rounds and each conciliator $C[n]$ takes $2^n+1$ rounds.
    Thus, it takes at most $3b+\sum_{i=1}^b 2^i+1 = O(\left|\bigcup_{r\geq R}F_r\right|)$ to reach conciliator $C[b]$, and we are guarantee that all processors output after $O(\left|\bigcup_{r\geq R}F_r\right|)$ after round $R$.
\end{proof}

\section{Related Work}
\label{sec:related}

\remove{
\begin{itemize}
    \item Zikas and Gafni
    \item sleepy model of Pass and Shy~\cite{pass_sleepy_2017}; $N$ is known in this model; but is it necessary? They solve consensus with 1/2 corruptions using a longest-chain protocol like bitcoin (so, with probabilistic safety)
    \item 1/2 corruption, VRF consensus of Momose and Ren CCS2022~\cite{momose_constant_2022}; $N$ is fixed too but does not seem used in the algorithm. Needs eventually-stable participation for liveness.
    \item Malkhi 1/3 2022\cite{malkhi_byzantine_2022}
    \item Dahlia's blog posts: CA in the 1/2 case\cite{research_minority_2022}
    \item Algorand\cite{chen_algorand_2019}
    \item Afek and Gafni message adversary\cite{afek_simple_2015}
\end{itemize}
}

In order to simplify reasoning about distributed-computing problems, a number of earlier works use the notion of a message adversary that controls communication while processors themselves never fail.
Afek and Gafni~\cite{afek_simple_2015} consider a synchronous system with omissions, where the omissions of outgoing messages are preformed by a message adversary.
In this model a processor is never faulty and obeys the algorithm following the messages it receives.
Thus, when solving a task in the model, all processors are obliged to output, as opposed to only correct ones in failure-based models. 
Using the message adversary model, Afek and Gafni greatly simplify the proof of the asynchronous computability theorem.

We adopt a similar setting in the current paper: a processor never fails, but a malicious adversary might temporarily impersonate it and send messages on its behalf.
Santoro and Widmayer~\cite{santoro_agreement_2007} also study consensus under benign, transient communication failures, which resemble the message adversary of Afek and Gafni, and prove that consensus is impossible with just one benign failure per round.

In an unpublished manuscript written in 2019 (now published on arXiv~\cite{gafni_synchronyasynchrony_2023}), Gafni and Zikas introduce the mobile message adversary in the authenticated Byzantine, synchronous setting with fixed, known participation.
They show how to solve commit-adopt in the resulting model, and the commit-adopt algorithm we present (\Cref{algo:ca}) borrows the main algorithmic idea.
The introduction of unknown participation in the IIAB model however significantly complicates the matter.
Gafni and Zikas also present a flawed consensus algorithm for their model (the flaw is that their conciliator does not satisfy the Validity property of~\Cref{def:conciliator}).

Algorand~\cite{chen_algorand_2019} proceeds from round to round choosing an authenticated committee for the round.
A committee member does not a-priori know the other committee members, but nevertheless when getting a message from a committee member it can verify its eligibility to be on the committee.
Proof-of-Stake ensures, probabilistically, that the committee consists of a 1/3 fraction of honest members.
The main difference with out setting is that processors know in Algorand know, even if only in expectation, the cardinality of the committee.
This has profound algorithmic implications: in Algorand a processor can check with high probability whether some fraction $f$ of the committee supports some value, whereas in the IIAB model processors can only check whether a fraction $f$ of the processors they hear of support some value.

Pass and Shi~\cite{pass_sleepy_2017} propose the sleepy model, where processor wake up to participate, and may go back to sleep.
Thus like Algorand participation is not known a-priori, but unlike Algorand how many participate is unknown (even in expectation).
They consider minority corruption and employ a longest-chain mechanism achieving probabilistic safety.
The Snow-White paper~\cite{daian_snow_2019} extends the work of Pass and Shi to the Proof-of-Stake setting and proposes practical algorithms.

Sandglass~\cite{pu_safe_2022} achieves deterministic safety under a minority of benign failures even when the faulty set can grow from round to round.
The algorithm we present also works in the Sandglass model since there are no Byzantine failures.
Gorilla~\cite{pu_gorilla_2023} extends Sandglass to tolerate malicious failures.
Compared to the current work, Gorilla does not use message authentication.
Instead, Gorilla uses verifiable delay functions (VDF)~\cite{boneh_verifiable_2018} to achieve a form of proof-of-work with deterministic guarantees.

Monmose and Ren~\cite{momose_constant_2022} achieve deterministic safety in the sleepy model assuming a failure ratio of less than 1/2, but their algorithm requires eventually stable participation to achieve progress with probability 1.
Malkhi, Momose, and Ren~\cite{malkhi_byzantine_2022} remove the eventual stability requirement at the expense of reducing corruption to at most 1/3.
In following work, Malkhi, Momose, and Ren~\cite{malkhi_towards_2023} achieve deterministically safe consensus under a failure ratio of 1/2 without eventual stabilization requirements.
This matches the guarantees of the algorithm presented in~\Cref{sec:proba-consensus}.

In the probabilistic-safety domain, starting with Bitcoin's longest-chain protocol, a series of works solve consensus under dynamic-participation with probabilistic safety~\cite{pass_sleepy_2017,daian_snow_2019,badertscher_ouroboros_2018,damato_no_2022,goyal_instant_2021}.

\subsection{Conclusion}

Like all good basic research, the unpublished 2019 write-up of Gafni and Zikas (now published on arXiv~\cite{gafni_synchronyasynchrony_2023}), which was considered by all who knew about it elegant but esoteric, is found 3 years later to contain an algorithmic idea for commit-adopt that is relevant to permissionless systems. 

In this paper, we build on the work of Gafni and Zikas to solve consensus in a general model of permissionless systems called the IIAB model.
We solve consensus deterministically under a bounded adversary (which subsumes eventually stable participation), and probabilistically, even under an unbounded adversary, but relying on a leader-election oracle that can be implemented using verifiable random functions.

We present structured modular algorithms without trying to reduce message or round complexity.
This leads to remarkably simple algorithms ready to be taught in class tomorrow. 
In this we follow the time-honored tradition of separation of concerns: First investigate feasibility at the utmost level of clarity; only then optimize for efficiency.
The latter is left to future work.

Last but not least:
The remarkable FLP result~\cite{fischer_impossibility_1985} showed that consensus can be prevented if $n$ processors are required to proceed each missing one processor per round.
Santoro and Widmayer~\cite{santoro_agreement_2007} proved a similar impossibilty result in a synchronous system with a single processor suffereing message-omission failures each round, where the affected processor may change from round to round.

These two settings give rise to two separate notions of eventual synchrony (the latter better called eventual stationarity).
First, eventual in the sense of FLP, i.e.\ asynchronous then eventually synchronous.
Second, in the sense of Santoro and Widmayer, i.e.\ synchronous but mobile faults then eventually the faults become stationary (i.e.\ eventually at least one processor is not affected).

The former notion of eventual exhibits mathematical irregularity -- Authenticated Byzantine can synchronously solve consensus with minority faulty\footnote{Some claim that the Dolev-Strong algorithm achieves 99\% fault tolerance, but the algorithm cannot satisfy the validity propety of consensus if at least half are faulty; moreover, if at least half are faulty, achieving consensus is not of much use because there is no way for an external observer to tell what the non-fauty processes decided.}, but the eventual version requires a bound of one third.
There is no such gap from half to one third in the eventually stationary notion:
As we have shown, the IIAB model solves consensus under eventually stationary for minority corruptions.

\printbibliography

\end{document}